\newcommand{\algoname}[1]{\uppercase{\texttt{#1}}}
\newtheorem{problem}{Problem}
\newtheorem{definition}{Definition}
\newtheorem{theorem}{Theorem}
\newtheorem{lemma}{Lemma}
\newcommand{\runningtime}{The last marker indicates the termination of the algorithm at which the algorithm is not improving any more. Each intermediate point indicates the termination of the algorithm at a particular iteration before convergence.}
\date{}
\begin{document}
\sloppy

\title{ELRUNA : Elimination Rule-based Network Alignment}

\author{
  Zirou Qiu*\\
  \textit{Clemson University} \\
  \texttt{zirouq@clemson.edu}
  \and
  Ruslan Shaydulin\\
  \textit{Clemson University} \\
  \and
  Xiaoyuan Liu\\
  \textit{University of Delaware} \\
  \and
  Yuri Alexeev\\
  \textit{Argonne National Laboratory} \\
  \and
  Christopher S. Henry\\
  \textit{Argonne National Laboratory} \\
  \and
  Ilya Safro\\
  \textit{University of Delaware} \\
  \texttt{isafro@udel.edu}
}
\maketitle
\begin{abstract}
{Networks model a variety of complex phenomena across different domains. In many applications, one of the most essential tasks is to align two or more networks to infer the similarities between cross-network vertices and to discover potential node-level correspondence. In this paper, we propose \algoname{elruna} (\textbf{el}imination \textbf{ru}le-based \textbf{n}etwork \textbf{a}lignment), a novel network alignment algorithm that relies exclusively on the underlying graph structure. Under the guidance of the elimination rules that we defined, \algoname{elruna} computes the similarity between a pair of cross-network vertices iteratively by accumulating the similarities between their selected neighbors. The resulting cross-network similarity matrix is then used to infer a permutation matrix that encodes the final alignment of cross-network vertices. In addition to the novel alignment algorithm, we  improve the performance of \textit{local search}, a commonly used postprocessing step for solving the network alignment problem, by introducing a novel selection method \algoname{rawsem} (\textbf{ra}ndom-\textbf{w}alk-based \textbf{se}lection \textbf{m}ethod) based on the propagation of vertices' mismatching across the networks. The key idea is to pass on the initial levels of mismatching of vertices throughout the entire network in a random-walk fashion.
Through extensive numerical experiments on real networks, we demonstrate that \algoname{elruna} significantly outperforms the state-of-the-art alignment methods in terms of alignment accuracy under lower or comparable running time. Moreover, \algoname{elruna} is robust to network perturbations such that it can maintain a close to optimal objective value under a high level of noise added to the original networks. Finally, the proposed \algoname{rawsem} can further improve the alignment quality with a smaller number of iterations compared with the naive local search method.\\
{\bf Reproducibility}: The source code and data are available at \url{https://tinyurl.com/uwn35an}.}
\end{abstract}

\section{Background and Motivation}
Networks encode rich information about the relationships among entities, including friendships, enmities, research collaborations, and biological interactions~\cite{koutra2017individual}. 
The network alignment problem occurs across various domains. Given two networks, many fundamental data-mining tasks involve quantifying their structural similarities and discovering potential correspondences between cross-network vertices~\cite{final}. For example, by aligning protein-protein interaction networks,  we can discover functionally conserved components and identify proteins that play similar roles in networked biosystems~\cite{netal}. In the context of marketing,  companies often link similar users across different networks in order to recommend products to potential customers~\cite{final}. Furthermore, the network alignment problem  exists in fields such as computer vision~\cite{netalign}, chemistry~\cite{natali}, social network mining~\cite{final}, and economy~\cite{multilevel}.

\par In general, network alignment aims to map\footnote{We use terms \textit{map} and \textit{align} interchangeably throughout the paper.} vertices in one network to another such that some cost function is optimized and pairs of mapped cross-network vertices are similar~\cite{final}. While the exact definitions of similarities are problem dependent, they often reveal some resemblance between structures of two networks and/or additional domain information such as similarities between DNA sequences~\cite{netal}. Formally, we define the network alignment problem as follows.

\begin{problem} [\textbf{Network Alignment Problem}]
Given two networks with underlying undirected, unweighted graphs $G_1 = (V_1, E_1)$ and $G_2 = (V_2, E_2)$ with $|V_1| = |V_2|$ (this constraint is trivially satisfied by adding dummy 0-degree nodes to the smaller network).\footnote{Note that the requirement $|V_1| = |V_2|$ is introduced only to make $\bf{P}$ a square matrix. For the simple computation of the objective,  0-degree dummy nodes do not contribute to it. In later discussion and for the implementation of the algorithm, we do not require $|V_1| = |V_2|$.}
Let $\textbf{A}$ and $\textbf{B}$ be the adjacency matrices of $G_1$ and $G_2$, respectively. The goal is to find a permutation matrix $\textbf{P}$ that minimizes the cost function:
\begin{equation} \label{eq: objective}
    \min_{\textbf{P}} ~~~-trace(\textbf{P}^T\textbf{A}\textbf{P}\textbf{B}^T),
\end{equation}
where $\textbf{P}$ encodes the bijective mappings between $V_1$ and $V_2$ for which $\textbf{P}_{i,u} = 1$ if $i \in V_1$ is aligned with $u \in V_2$, and $\textbf{P}_{i,u} = 0$ otherwise.

\par An equivalent problem is to maximize the number of \textbf{\textit{conserved}} edges in $G_1$, for which an edge $(i, j) \in E_1$ is conserved if $\textbf{P}_{i, u} = 1, \; \textbf{P}_{j, v} = 1$ and $(u, v) \in E_2$. 
\end{problem}

\par The problem above  is a special case of the \textit{quadratic assignment problem} (QAP), which is known to be NP-hard~\cite{approx_qap}. The network alignment problem can also be considered as an instance of a \textit{subgraph isomorphism problem}~\cite{liu2017novel}. Because of its hardness, many heuristics have been developed to solve the problem by relaxing the integrality constraints. Typically, the existing methods first compute the similarity between every pair of cross-network vertices by iteratively accumulating similarities between pairs of cross-network neighbors and then inferring the alignment of cross-network nodes by solving variants of the maximum weight matching problem~\cite{survey}.

\par Many  approaches provide insights into the potential correspondence between cross-network vertices; however, they still exhibit several limitations. First, under the setting of some previous methods~\cite{final, multilevel, bigalign, netalign, isorank, liao2009isorankn, modulealign, hubalign, feizi2019spectral, memivsevic2012c}, computing similarity between $i \in V_1$ and $u \in V_2$ is a process of accumulating the similarities between \textbf{all} pairs of their cross-network neighbors. In other words, while computing the similarity between $i$ and $u$, each of their neighbors contributes \textbf{multiple times}. This might lead to an unwanted case where $i$ has a high similarity score with $u$ simply because $u$ is a high-degree node. Thus they have many pairs of cross-network neighbors that can contribute similarities to $(i, u)$. In addition, this setting  makes it difficult to effectively penalize the degree difference between $i$ and $u$; and after normalization, the resulting similarity is diluted because of the inclusion of many ``noisy'' similarities. 

\begin{wrapfigure}{r}{0.4\textwidth}
\centering
\includegraphics[width=\linewidth, trim=0cm 2cm 0.5cm 2.5cm, clip]{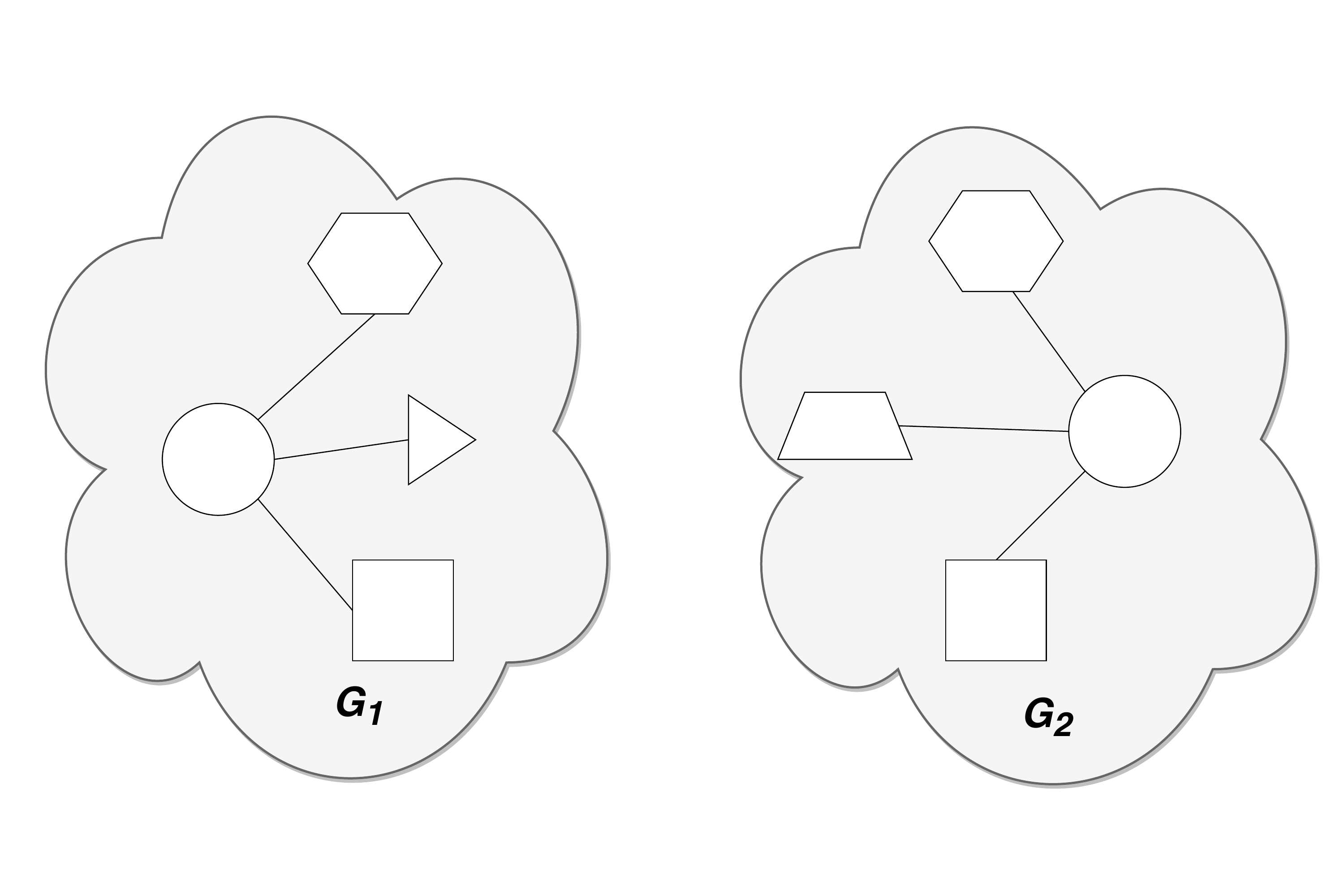}
\caption{Example of the dilution of results}
\label{fig:limitation}
\end{wrapfigure}

To illustrate the dilution by noisy similarities, consider two \textbf{unattributed} graphs shown in Figure \ref{fig:limitation}. 
Let the shape of a node denote its ground-truth identity. For example, the circle in $G_1$ should have a higher similarity with the other circle in $G_2$ than nodes with other shapes. When the similarity between two circle nodes gets updated, some methods accumulate the similarities not only between $(hexagon, hexagon)$, $(square, square)$ and $(triangle, trapezoid)$ but also  between $(hexagon, square)$, $(hexagon, trapezoid)$, $(square, hexagon)$, and so on. However, one could argue that the inclusion of the similarity between dissimilar vertices will dilute the result. 
\par Another limitation is that many existing network alignment algorithms rely on non-network information (prior-similarity matrices) to generate high-quality alignments~\cite{netalign, modulealign, feizi2019spectral, natali, final, yasar2018iterative, ghost, du2017first}. However, these methods are useless when no such information is available.

\par \textbf{Our contribution}: We address the network alignment problem by focusing on overcoming the above limitations. The main contributions of this paper are as follows:
\begin{enumerate}
    \item \textbf{\textbf{\algoname{elruna}}: Network Alignment Algorithm}. We propose a novel network alignment algorithm \algoname{elruna} that identifies \textit{globally most similar} pairs of vertices based on the growing \textit{contribution threshold} (both defined later). Such a threshold is used to determine the pairs of cross-network vertices that can contribute similarities while eliminating others. \textbf{To the best of our knowledge, \algoname{elruna} is the first network alignment algorithm that introduces the elimination rule into the process of accumulating similarities.} Another novelty is that \algoname{elruna} solves the network alignment problem by iteratively solving smaller subproblems at the neighborhood scale. Extensive experimental results show that the proposed \algoname{elruna} significantly outperforms the state-of-the-art counterparts under lower or comparable running time. Moreover, \algoname{elruna} can maintain a close-to-optimal objective value under a high level of noise added to the original networks. What makes \algoname{elruna} even more competitive is that it discovers high-quality alignment results relying solely  on the topology of the networks (without the help of non-network information).
    \item \textbf{\algoname{rawsem}: Selection Method for Local Search.} We introduce a novel selection method \algoname{rawsem} for a local-search procedure that narrows the search space by locating mismatched vertices. The proposed method first quantifies the initial amount of mismatching of each vertex and then propagates the values of mismatching throughout the network in a \texttt{PageRank}~\cite{pagerank} fashion. After convergence, vertices with a high level of mismatching are selected into the search space. To the best of our knowledge, \algoname{rawsem} is the first random-walk-based \textbf{selection method} for the local search scheme in solving the network alignment problem.
    
    \item \textbf{Evaluations}. We conduct extensive experiments to analyze the effectiveness and efficiency of \algoname{elruna}. We compare \algoname{elruna} with eight carefully chosen  state-of-the-art baselines that are superior to many other algorithms in both quality and running time. Our experiments cover three real-world scenarios: (1) \textit{self-alignment with and without additional noise}, (2) \textit{alignment between homogeneous networks}, and (3) \textit{alignment between heterogeneous networks}. The results show that the proposed \algoname{elruna} (even before applying postprocessing local search) already significantly outperforms all the baseline methods. At the same time, the proposed \algoname{rawsem} can further improve the objective values to an optimum with a dramatically decreased number of iterations compared with the naive local search.
\end{enumerate}

\section{Related Work}
Extensive research has been conducted to solve the network alignment problem. In most methods, the underlying intuition is that \emph{two cross-network vertices are similar if their cross-network neighbors are similar}. The limitations of the existing works are discussed in the introduction section.

\par \textbf{Unattributed network alignment:} Koutra et al.~\cite{bigalign} formulated the bipartite network alignment problem as a quadratic assignment problem and proposed \texttt{Big-align}, which is an iterative improvement algorithm to find the optimum.  Bayati et al.~\cite{netalign} introduced \texttt{NetAlign}, which treats the network alignment problem as an integer
quadratic program and solves it using the belief propagation. Xu et al.~\cite{liu2017novel} solved the network alignment problem by projecting the problem into the domain of computational geometry while preserving the topology of the graphs. Then they used the rigid transformations approach to compute the similarity scores. Feizi et al. introduced \texttt{EigenAlign}~\cite{feizi2019spectral}, which formulates the network alignment problem with respect to not only the number of conserved edges but also non-conserved edges and neutral edges. The authors then solved the problem as an eigenvector computation problem that finds the eigenvector with the largest eigenvalue. 

\par \textbf{The MAGANA family}: Saraph and Milenkovic introduced \texttt{MAGANA}~\cite{saraph2014magna}, which uses a genetic algorithm-based local search to solve the alignment problem. Later, they introduced \texttt{MAGANA++}\cite{vijayan2015magna++}, which extends \texttt{MAGANA} with parallelization.  Milenkovic et al.~\cite{vijayan2017alignment} addressed the dynamic network alignment problem where the structures of the networks evolve over time. They proposed \texttt{DynaMAGNA++}~\cite{vijayan2017alignment}, which conserves dynamic edges and nodes.  

\textbf{Attributed network alignment}: Klau~\cite{natali} \ formulated the problem using the maximum weight trace and suggested a Lagrangian relaxation approach. Zhang and Tong~\cite{final} tackled the attributed network alignment problem for which vertices have different labels. They dropped the topological consistency assumption and solved the problem by using attributes as alignment guidance. Du et al.~\cite{du2017first} also addressed the attributed network alignment problem where the underlying graphs are evolving. They formulated the problem as a Sylvester equation and solved it in an incremental fashion with respect to the updates of networks. In another work  Zhang et al.~\cite{multilevel} solved the multilevel network alignment problem based on a coarsening and uncoarsening scheme. By coarsening the network into multiple levels, they discovered not only the node correspondence of the original network but also the cluster-level correspondence with different granularities. Such coarsening-uncoarsening methods have been  successful in solving various cut-based optimization problems on graphs~\cite{ron2011relaxation,safro2006graph,safro2015advanced} but, to the best of our knowledge, are used for the first time for network alignment.

Heimann et al. introduced \texttt{REGAL}~\cite{heimann2018regal}, which tackles the network alignment problem from a node representation learning perspective. By leveraging the low-rank matrix approximation method, they extracted the node embeddings, then constructed the alignment of vertices based on the similarity between embeddings of cross-network vertices. In addition to finding the one-to-one mapping of vertices, \texttt{REGAL} can also identify the top-$\alpha$ potential mappings for each vertex.  Heimann et al. proposed \texttt{HashAlign}~\cite{heimann2018hashalign}, which solves the multiple network alignment problem also based on node representation learning for which each node-feature vector encodes topological features and attributed features.

\textbf{Biological network alignment}:
\texttt{IsoRank}~\cite{isorank} is an alignment algorithm that is equivalent to PageRank on the Kronecker product of two networks. \texttt{IsoRankN}~\cite{liao2009isorankn} extends \texttt{IsoRank} by applying spectral graph partitioning to align multiple networks simultaneously. \texttt{Hubalign}~\cite{hubalign}  involves computing topological importance for each node. Two cross-network vertices have similar scores if they play similar roles in the networks, such as hubs or nodes with high betweenness centralities. \texttt{NETAL}~\cite{netal}  introduces the concept of interaction scores between each pair of cross-network vertices that are estimations of the number of conserved edges. \texttt{ModuleAlign}~\cite{modulealign} combines the topological information with non-network information such as protein sequence for each vertex and produces an alignment that resembles both topological similarities and sequence similarities.  \texttt{GHOST}~\cite{ghost} is another biological network alignment algorithm based on the graph spectrum. \texttt{GHOST} determines the similarities between cross-network vertices based on the similarities between the topological signatures of each vertex; each signature is obtained by computing the spectrum of the k-egocentric subgraph of each vertex.   \texttt{C-GRAAL}~\cite{memivsevic2012c} is a member of the \texttt{GRAAL} family that iteratively computes the cross-network similarities based on the \textit{combined neighborhood density} of each node.


\section{\algoname{elruna} : Elimination Rule-Based Network Alignment}
In this section, we introduce the network alignment algorithm \algoname{elruna}. In Table \ref{tab:symbol} we first provide notation used throughout this paper. Next we define three rules that serve as a guide for our algorithm.  We then introduce the pseudocode of \algoname{elruna} and analyze its running time.

\begin{center}
\captionof{table}{Notation} \label{tab:symbol} 
 \begin{tabular}{||l l||} 
 \hline
 \textbf{Symbol} & \textbf{Definition}\\ [0.5ex] 
 \hline \hline
 $G_1 = (V_1, E_1), G_2 = (V_2, E_2)$ & the two networks \\
 $\mathbf{A}, \mathbf{B}$ &  adjacency matrices of $G_1$ and $G_2$ \\
 \hline
 $n_1$, $n_2$ & number of nodes in $G_1$ and $G_2$ \\
 $m_1$, $m_2$ & number of edges in $G_1$ and $G_2$ \\
 \hline
 $i_{(1)}$, $j_{(1)}$ & example nodes in $G_1$ \\
 $u_{(2)}$, $v_{(2)}$ & example nodes in $G_2$ \\
 \hline
 $N(i)$ &  set of neighbors of node $i$ (does not include $i$)\\
 \hline
 $\mathbf{S}$ &  $n_1 \times n_2$ cross-network similarity matrix \\
 $\mathbf{P}$ &  $n_1 \times n_2$ alignment matrix \\
 \hline
 $f : V_1 \rightarrow V_2$ &  injective alignment function. \\
 \hline
 $t_{max}$ & maximum number of iterations \\
 \hline
 
\end{tabular}

\end{center}

\par Given two undirected networks with underlying graphs $G_1 = (V_1, E_1)$ and $G_2 = (V_2, E_2)$, let $|V_1| = n_1$, $|V_2| = n_2$, $|E_1| = m_1$ and $|E_2| = m_2$. Without loss of generality, assume $n_1 \leq n_2$. Nodes in each network are labeled with consecutive integers starting from $1$. 
\par Throughout the paper, we use bold uppercase letters to represent matrices and bold lowercase letters to represent vectors. We use a subscript over a node to indicate the network it belongs to, for example, $i_{(1)} \in V_1$. We use a superscript over vectors and matrices to denote the number of iterations. Let $N(i_{(1)})$ denote the set of neighbors of vertex $i_{(1)}$. Let $\mathbf{S}$ be the $n_1 \times n_2$ cross-network similarity matrix, where $\mathbf{S}_{i,u}$ encodes the similarity score between $i_{(1)}$ and $u_{(2)}$. Note that $i_{(1)}$ and $u_{(2)}$ do not carry superscripts for matrix or vector indexing. Let $f : V_1 \rightarrow V_2$ denote the injective alignment function for which $f(i_{(1)}) = u_{(2)}$ if $\textbf{P}_{i,u} = 1$ ($\textbf{P}$ is defined in Equation (\ref{eq: objective})). Function $f$ is injective because $n_1$ could be less than $n_2$ ($f$ is bijective when $n_1 = n_2$). Let $t_{max}$ denote the maximum number of iterations of our algorithm. We always set $t_{max}$ equal to the larger diameter of the two networks. 

\par The proposed \algoname{elruna} relaxes the combinatorial constraints of $\mathbf{P}$ defined in Problem \ref{eq: objective}; that is, it iteratively computes a similarity matrix $\mathbf{S}$ instead of directly finding a permutation matrix $\mathbf{P}$. In general, our proposed algorithm \algoname{elruna} is a two-step procedure: 
\begin{enumerate}
    \item[](1) \textbf{Similarity computation:} Based on the elimination rules, \algoname{elruna} iteratively updates the \textit{cross-network similarity matrix} $\mathbf{S}$, which encodes similarities between cross-network vertices.
    \item[] (2) \textbf{Alignment:} Based on the converged $\mathbf{S}$, \algoname{elruna} computes the 0-1 alignment matrix $\mathbf{P}$, which encodes the final alignment of cross-network vertices. 
\end{enumerate}

Note that the main difference between most alignment algorithms is how they compute the similarities between vertices. The alignment task does not require a complex alignment method (the second step) if its similarity computation step produces high-quality alignment matrices. Therefore, the main focus of \algoname{elruna} is to compute high-quality alignment matrices.

\subsection{Step 1: Similarity Computation}
We introduce three rules that serve as the guidance of the proposed algorithm. We then present the pseudocode for computing the similarity matrix $\mathbf{S}$. Overall, the proposed \algoname{elruna} computes the similarities between cross-network vertices by updating ${\bf S}$ iteratively. 
\par We first provide an intuitive concept of what it means for a vertex to \textbf{\textit{contribute}} to the similarity computation process. Given a pair of cross-network vertices ($i_{(1)}$, $u_{(2)}$), let $j_{(1)}$ be a neighbor of $i_{(1)}$. At the $k$th iteration of the algorithm, $j_{(1)}$ is said to \textit{\textbf{contribute}} to the computation of $\mathbf{S}^{(k)}_{i,u}$ if we accumulate (defined later) the similarity between $j_{(1)}$ and a neighbor $v_{(2)}$ of $u_{(2)}$ into $\mathbf{S}^{(k)}_{i,u}$. By the same token, the pair ($j_{(1)}$, $v_{(2)}$) is also said to \textbf{\textit{contribute}} to the computation of $\mathbf{S}^{(k)}_{i,u}$. We now provide three essential definitions that are the backbone of \algoname{elruna}.

\begin{definition} [\textbf{Conserved Vertices and Edges}]
Given a vertex $i_{(1)}$ and its aligned vertex $u_{(2)} = f(i_{(1)})$, let $j_{(1)} \in N(i_{(1)})$ be a neighbor of $i_{(1)}$ and $v_{(2)} = f(j_{(1)})$ be the aligned vertex of $j_{(1)}$. Node $j_{(1)}$ is a \textbf{\textit{conserved}} neighbor of $i_{(1)}$ if $v_{(2)} \in N(u_{(2)})$. Under this scenario, $v_{(2)}$ is also a conserved neighbor of $u_{(2)}$. An edge is \textbf{\textit{conserved}} if its incident vertices are conserved neighbors of each other.
\end{definition}

\begin{definition} [\textbf{Best Matching}]
A vertex $u_{(2)}$ is the \textbf{\textit{best matching}} of a vertex $i_{(1)}$ if aligning $i_{(1)}$ to $u_{(2)}$ maximizes the number of conserved neighbors of $i_{(1)}$ in comparison with aligning $i_{(1)}$ to other nodes in $V_2$. The best matching of $u_{(2)}$ is defined in the same fashion.
\end{definition}

\begin{definition} [\textbf{Globally Most Similar}]
At the $k$th iteration, a vertex $u_{(2)}$ is \textbf{\textit{globally most similar}} to a vertex $i_{(1)}$ if $\text{\bf{S}}^{(k)}_{iu} \geq \text{\bf{S}}^{(k)}_{iv}, \; \forall v_{(2)} \in V_2$. Likewise, $i_{(1)}$ is globally most similar to $u_{(2)}$ if $\text{\bf{S}}^{(k)}_{iu} \geq \text{\bf{S}}^{(k)}_{ju}, \; \forall j_{(1)} \in V_1$.
\end{definition}

Note that $i_{(1)}$ being globally most similar to $u_{(2)}$  does not imply that $u_{(2)}$ is also globally most similar to $i_{(1)}$. Also, it is possible that $u_{(2)}$ is globally most similar to more vertices in $G_1$, but they each have a different similarity value with $u_{(2)}$. Given two vertices $i_{(1)}$ and $j_{(1)}$ in $G_1$, we can have $\text{\bf{S}}^{(k)}_{iu} \neq \text{\bf{S}}^{(k)}_{ju}$, but among all vertices in $G_2$, $u$ has the highest similarity with both $i_{(1)}$ and $j_{(1)}$, respectively. 

\par Theorem \ref{them:main}  shows that the objective defined in Equation (\ref{eq: objective}) is minimized when all nodes are aligned to their best matchings (if possible).

\begin{theorem}\label{them:main}
Given an alignment matrix $\text{\bf{P}}$ for which all nodes are aligned to their best matchings, $\text{\bf{P}}$ is an optimal solution of Equation \ref{eq: objective}.
\end{theorem}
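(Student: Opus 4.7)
My plan is to connect the trace objective to a per-vertex decomposition of conserved edges, and then argue that simultaneously maximizing each term in that decomposition must minimize the full objective.

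First, I would rewrite the objective $-\text{trace}(\mathbf{P}^T\mathbf{A}\mathbf{P}\mathbf{B}^T)$ as (the negative of) the number of conserved edges. Expanding the trace index by index gives
\[
\text{trace}(\mathbf{P}^T\mathbf{A}\mathbf{P}\mathbf{B}^T) \;=\; \sum_{i,j\in V_1}\sum_{u,v\in V_2}\mathbf{P}_{iu}\,\mathbf{A}_{ij}\,\mathbf{P}_{jv}\,\mathbf{B}_{vu},
\]
and because $\mathbf{P}$ is a 0-1 permutation matrix the nonzero summands are exactly those with $\mathbf{P}_{iu}=\mathbf{P}_{jv}=1$, i.e.\ $u=f(i)$ and $v=f(j)$. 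Therefore the trace counts ordered pairs $((i,j),(u,v))$ with $(i,j)\in E_1$ and $(f(i),f(j))\in E_2$, which is (twice) the number of conserved edges. So minimizing the objective is equivalent to maximizing the count of conserved edges.

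Next, I would pass from edges to vertices. For a fixed alignment $\mathbf{P}$, let $c_i(\mathbf{P})$ be the number of conserved neighbors of $i_{(1)}$, i.e.\ $c_i(\mathbf{P}) = |\{j\in N(i)\,:\,f(j)\in N(f(i))\}|$. Every conserved edge contributes exactly one unit to each of its two endpoints' counts, so
\[
|\text{conserved edges}| \;=\; \tfrac{1}{2}\sum_{i\in V_1} c_i(\mathbf{P}).
\]
Hence the objective is minimized iff $\sum_{i\in V_1} c_i(\mathbf{P})$ is maximized.

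The payoff step uses the definition of \emph{best matching}: $u=f(i)$ is the best matching of $i$ exactly when $c_i(\mathbf{P})$ attains its maximum possible value over all legal choices for $f(i)$. Under the hypothesis of the theorem, every $i\in V_1$ simultaneously attains this per-vertex maximum. Therefore, for any alternative alignment $\mathbf{P}'$ and every $i$, $c_i(\mathbf{P})\ge c_i(\mathbf{P}')$; summing over $i$ gives $\sum_i c_i(\mathbf{P})\ge \sum_i c_i(\mathbf{P}')$, so the number of conserved edges under $\mathbf{P}$ is at least that under $\mathbf{P}'$, which by Step~1 means $\mathbf{P}$ minimizes $-\text{trace}(\mathbf{P}^T\mathbf{A}\mathbf{P}\mathbf{B}^T)$.

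The main obstacle, and the point requiring the most care in writing, is Step~3: the definition of best matching is inherently relational (it refers to the conserved neighbors of $i$, which depends on where the \emph{other} nodes are aligned), so one must justify the simultaneous comparison $c_i(\mathbf{P})\ge c_i(\mathbf{P}')$ for a single $\mathbf{P}'$. I would handle this by interpreting the best-matching condition as saying $c_i(\mathbf{P})$ equals the global maximum of the per-vertex count over all alignments that include $i\mapsto f(i)$, and then observing that this global maximum certainly dominates any specific $\mathbf{P}'$. Stating the definition in this slightly strengthened form — which is consistent with how \algoname{elruna} selects best matchings based on the converged similarities — closes the gap and yields optimality.
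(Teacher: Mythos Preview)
Your argument is essentially the paper's: both decompose the trace into per-vertex conserved-neighbor counts and then argue termwise maximality, with the paper merely phrasing it as a contradiction (a strictly better $\bar{\mathbf{P}}$ would force some diagonal term $\mathbf{B}'_{i,*}\mathbf{B}^T_{*,i}$ to exceed that of $\mathbf{P}$, contradicting best matching at $i$). The subtlety you flag in Step~3 about the relational nature of the best-matching definition is real, and the paper's own proof makes exactly the same implicit leap without comment; your resolution---reading ``best matching'' so that $c_i(\mathbf{P})$ is a global per-vertex maximum over all alignments---is precisely what both arguments require.
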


\begin{proof}
Suppose for the sake of contradiction that there exists a better alignment matrix $\Bar{\textbf{P}} \neq \textbf{P}$ such that 
\begin{equation} \label{eq:best_matching_optimal}
    trace(\Bar{\textbf{P}}^T\textbf{A}\Bar{\textbf{P}}\textbf{B}^T) > trace(\textbf{P}^T\textbf{A}\textbf{P}\textbf{B}^T).
\end{equation}

Let $\textbf{B}'$ and $\textbf{B}''$ denote $\Bar{\textbf{P}}^T\textbf{A}\Bar{\textbf{P}}$ and $\textbf{P}^T\textbf{A}\textbf{P}$, respectively. Inequality (\ref{eq:best_matching_optimal}) implies that 
\begin{equation} \label{eq:r_c_ineq}
    \textbf{B}'_{i, *} \textbf{B}^{T}_{*,i} > \textbf{B}''_{i, *} \textbf{B}^{T}_{*,i}, \; \exists \; i \in {V}_1 ,
\end{equation}
where $\textbf{B}^{T}_{i, *}$ and $\textbf{B}^{T}_{*,i}$ denote the $i$th row and column of $\textbf{B}^{T}$, respectively. However, the inequality (\ref{eq:r_c_ineq}) implies that in $\mathbf{P}$ there exists a vertex $i_{(1)}$ that is not aligned with its best matching, which is a contradiction.
\end{proof}

\par By Theorem \ref{them:main}, in order to minimize the objective value, one wants nodes to be aligned with their best matchings. \textbf{Algorithm \algoname{elruna} heuristically attempts to ensure that the node that is globally most similar to $i_{(1)}$, provided by the similarity matrix $\mathbf{S}$, corresponds to the best matching of $i_{(1)}$.} Then the alignment process is simply to map each node $i_{(1)}$ in $G_1$ to its globally most similar vertex $u_{(2)}$ in $G_2$. 

\subsubsection{Rule 1 -- level-one elimination}
\par While computing similarity between a pair of cross-network vertices, under the setting of \algoname{elruna}, a neighbor cannot contribute twice. Given a pair of vertices $(i_{(1)}, u_{(2)})$, we consider computing their similarity as a process of aligning their neighbors. In other words, \textbf{a pair of cross-network neighbors $(j_{(1)}, v_{(2)})$ can contribute its similarity to $\mathbf{S}_{i,u}$ only if $j_{(1)}$ is qualified to be aligned with $v_{(2)}$}. Ideally, a pair of cross-network neighbors can be aligned (and therefore qualified to contribute) if at least one of them is globally most similar to the other. As a result, $i_{(1)}$ and $u_{(2)}$ have a higher similarity if they have more neighbors that can be aligned, which also minimizes the objective defined in Equation (\ref{eq: objective}). The injective nature of alignments leads to our first rule.
 
 \begin{enumerate}
    \item[] \textbf{Rule 1.} At the $k$th iteration of the algorithm, given a pair of cross-network vertices $(i_{(1)}, u_{(2)})$, a neighbor $j_{(1)}$ of $i_{(1)}$ can contribute its similarity (with a neighbor of $u_{(2)}$) to $\mathbf{S}^{(k)}_{i,u}$ \textbf{at most once}. Similarly, a neighbor $v_{(2)}$ of $u_{(2)}$ can contribute its similarity (with a neighbor of $i_{(1)}$) to $\mathbf{S}^{(k)}_{i,u}$ \textbf{at most once}. 
\end{enumerate}

\begin{wrapfigure}{r}{0.4\textwidth}
\centering
\includegraphics[width=\linewidth]{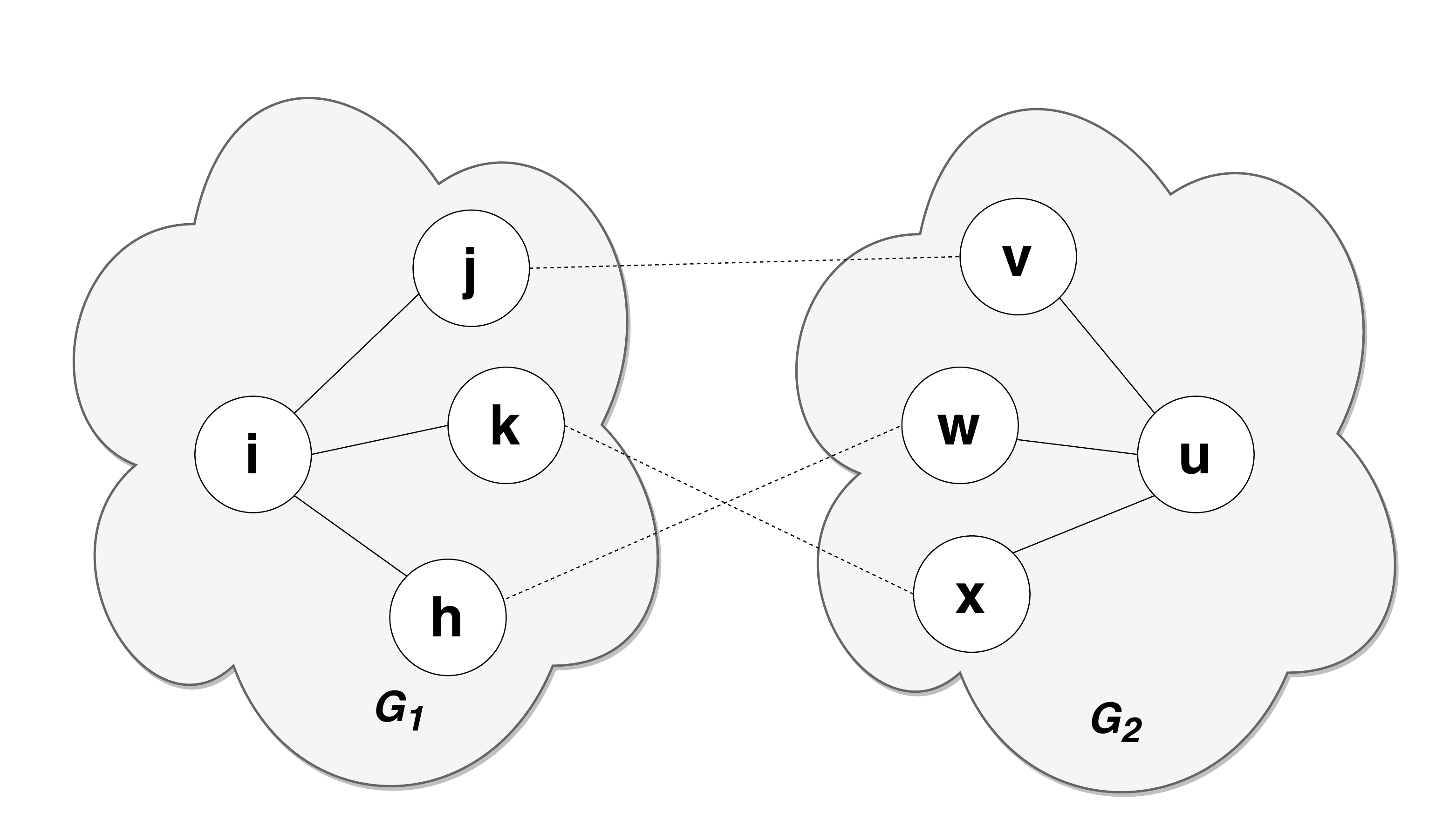}
\caption{Example of Rule 1} 
\label{fig:rule1}
\end{wrapfigure}

Figure (\ref{fig:rule1}) illustrates an example graph under Rule 1 where dashed lines indicate the pairs of cross-network neighbors that contribute to the computation of $\mathbf{S}^{(k)}_{i,u}$. For example, if the similarity between $j_{(1)}$ and $v_{(2)}$ is accumulated into $\mathbf{S}^{(k)}_{i,u}$, then the similarity between ($j_{(1)}$, $w_{(2)}$), ($j_{(1)}$, $x_{(2)}$), ($k_{(1)}$, $v_{(2)}$), and ($h_{(1)}$, $v_{(2)}$) can no longer contribute to $\mathbf{S}^{(k)}_{i,u}$. 
\par Formally, Rule 1 decreases the number of pairs of cross-network contributing neighbors from $|N(i_{(1)})| |N(u_{(2)})|$ to at most $\min \{|N(i_{(1)})|, |N(u_{(2)})|\}$. This setting provides an effective way to penalize the degree differences (as shown later). Additionally, it reduces the amount of ``noisy'' similarities included during the iteration process.

\subsubsection{Rule 2 -- level-two elimination} Prior to defining Rule 2, we assume that Rule 1 is satisfied. As we have defined previously, a pair of cross-network neighbors can be aligned if at least one of them is globally most similar to the other. Because of the iterative nature of \algoname{elruna}, however, during the first several iterations of the algorithm, the computed similarities are less revealing of the true similarities between vertices. In other words, we are less certain about whether the node that is globally most similar to $j_{(1)}$ is indeed its best match. Therefore, for each iteration,  allowing only most similar pairs of neighbors globally to contribute while discarding others might fail to accumulate valuable information.

\par We want to relax this constraint. We observe that as we proceed with more iterations, the reliability of similarities increases. To model this increase of confidence, we define \textit{growing} thresholds such that under Rule 1, \textbf{pairs of neighbors whose similarities are greater than their thresholds are allowed to contribute}. Such thresholds are low at the first iteration and grow gradually as we proceed with more iterations. 
\par We define vectors $\mathbf{b1}$ and $\textbf{b2}$ for two networks respectively such that 
\[
\textbf{b1}^{(k)}_i = \max_{u \in V_2} \textbf{S}^{(k)}_{i,u} \text { and } \textbf{b2}^{(k)}_{u} = \max_{i \in V_1} \textbf{S}^{(k)}_{i, u}.
\]
Informally, $\textbf{b1}^{(k)}_i$ is the similarity between $i_{(1)}$ and its globally most similar vertex at the $k$th iteration. Additionally, we define \textit{\textbf{contribution-threshold vectors}} $\textbf{c1}$ and $\textbf{c2}$ for two networks. Given a pair of vertices $(i_{(1)}, u_{(2)})$, a pair of their cross-network neighbors $(j_{(1)}, v_{(2)})$ can  contribute similarity to $\mathbf{S}^{(k+1)}_{i,u}$ only if $\textbf{S}^{(k)}_{j, v} \geq \min\{ \textbf{c1}^{(k)}_j \; ,\textbf{c2}^{(k)}_v\}$.  At the same time, such thresholds grow as the algorithm proceeds with more iterations.

 \par Computing the similarity between $(i_{(1)}, u_{(2)})$ iteratively can be seen as a process of gathering information (regarding the similarity) from other nodes in a breadth-first search manner such that $\textbf{S}^{(k)}_{i, u}$ is computed based on similarities between cross-network nodes that are within distance $k$ away from $i_{(1)}$ and $u_{(u)}$. A node $j_{(1)}$ is said to be \textit{visited} by $i_{(1)}$ at the the $k$th iteration if $j_{(1)}$ is within distance $k$ away from $i_{(1)}$. \textbf{We use the fraction of visited nodes after each iteration as a measure to model the increase of the contribution threshold}. We now define the contribution threshold formally.

\begin{definition} [\textbf{Contribution Threshold}]
Given $G_1$ with size $n_1$, let \text{\bf{T1}} be the $n_1 \times t_{max}$ matrix for which $\text{\bf{T1}}_{i, k}$ is the fraction of nodes that $i_{(1)}$ has visited after the $k$th iteration. Then the \textbf{contribution threshold} of $i_{(1)}$, denoted by $\text{\bf{c1}}_{i}$, after the $k$th iteration is defined as

\begin{equation} \label{eq:ct}
    \text{\bf{c1}}^{(k)}_i = \text{\bf{b1}}^{(k)}_i \cdot \text{\bf{T1}}_{i,k}.
\end{equation}
\end{definition}

$\mathbf{T1}_{i, 0} = 1 / n_1 $ for all $i_{(1)} \in {V}_1$.  As $k$ approaches $t_{max}$, $\mathbf{T1}_{i, k}$ approaches $1$ and $\textbf{c}^{(k)}_i $ approaches $\textbf{b}^{(k)}_i$. $\mathbf{c2}$ and $\mathbf{T2}$ are defined for $G_2$ in the same fashion.  The pseudocode for computing $\textbf{T1}$ and $\mathbf{T2}$ is shown in Algorithm (\ref{algo:node_coverage}).

\begin{algorithm} [htb!]
\DontPrintSemicolon
\KwIn{${G} = ({V}, {E})$, $t_{max}$}

\KwOut{Contribution threshold matrix $\textbf{T}$ }

$\textbf{T} \gets n \times t_{max}$ zero-matrix \Comment{$|{V}| = n$} \;

\For{$i \in {V}$} {
    $\textbf{d} \gets n \times 1$ vector with all entries equal to $0$ \;
    $\textbf{d}_i = 1$ \;
    $num\_of\_visited\_node \gets$ $1$ \;
    $frontier \gets$ empty list. \;
    $frontier$.\textbf{insert}($i$)\;
    \For{$k \gets 1$ \textbf{to} $t_{max}$} {
        $new\_frontier \gets$ empty list. \;
        \For {$j$ \textbf{in} $frontier$}
        {
            \For{$q$ \textbf{in} ${N}(j)$}
            {
                \If{$\textbf{d}_q$ == $0$}
                {
                    $num\_of\_visited\_node += 1$\;
                    $new\_frontier$.\textbf{insert}($q$)\;
                    $\textbf{d}_q = 1$\;
                }
            }
        }
        $\textbf{T}_{i,k} = \frac{num\_of\_visited\_node}{n}$\;
        $frontier = new\_frontier$\;
    }
}
\Return{\textbf{T}}\;
\caption{Contribution Threshold}
\label{algo:node_coverage}
\end{algorithm}

\par  Let $(j_{(1)}, v_{(2)})$ be a pair of cross-network neighbors of $(i_{(1)}, u_{(2)})$. Without loss of generality, suppose $\textbf{S}^{(k)}_{j,v} \geq \textbf{c1}^{(k)}_j$ but $\textbf{S}^{(k)}_{j,v} < \textbf{c2}^{(k)}_v$. This implies that there must exist a better alignment with higher similarity for $v_{(2)}$ at the $k$th iteration. We still want to accumulate the similarity between $j_{(1)}$ and $v_{(2)}$ to $\textbf{S}^{(k)}_{i,u}$. At the same time, we should also consider the loss of similarity by aligning $j_{(1)}$ to $v_{(2)}$ (recall that we model similarity accumulation as a process of aligning neighbors). We introduce a simple measure called \emph{net similarity}:
\begin{equation}
    \textbf{N}^{(k)}_{jv} = 2 \textbf{S}^{(k)}_{j, v} - \left[\frac{\textbf{S}^{(k)}_{j, v} - \textbf{c1}^{(k)}_{j}}{\textbf{b1}^{(k)}_j - \textbf{c1}^{(k)}_{j}} \cdot (\textbf{b2}^{(k)}_v - \textbf{c2}^{(k)}_{v}) + \textbf{c2}^{(k)}_{v}\right]
\end{equation}
and if $\textbf{S}^{(k)}_{j,v} \geq \textbf{c2}^{(k)}_v$ but $\textbf{S}^{(k)}_{j,v} < \textbf{c1}^{(k)}_j$,

\begin{equation}
    \textbf{N}^{(k)}_{jv} = 2 \textbf{S}^{(k)}_{j, v} - \left[\frac{\textbf{S}^{(k)}_{j, v} - \textbf{c2}^{(k)}_{v}}{\textbf{b2}^{(k)}_v - \textbf{c2}^{(k)}_{v}} \cdot (\textbf{b1}^{(k)}_j - \textbf{c1}^{(k)}_{j}) + \textbf{c1}^{(k)}_{j}\right],
\end{equation}
which leads to our second rule.

\begin{enumerate}
    \item[] \textbf{Rule 2.} Given a pair of cross-network vertices $(j_{(1)}, v_{(2)})$, under Rule 1, the amount of similarity they can contribute to a pair of neighbors $(i_{(1)}, u_{(2)})$ is 
    $$
    \begin{cases}
        \mathbf{S}^{(k)}_{j, v} & \text{if $\textbf{S}^{(k)}_{j,v} \geq \max \{ \textbf{c1}^{(k)}_j, \; \textbf{c2}^{(k)}_v$\} }  \\
        
        \mathbf{N}^{(k)}_{j, v} & \text{if $\min \{ \textbf{c1}^{(k)}_j, \; \textbf{c2}^{(k)}_v\} \leq \mathbf{S}^{(k)}_{j, v}  \leq  \max \{ \textbf{c1}^{(k)}_j, \; \textbf{c2}^{(k)}_v\} $ } \\
        
        0 & \text{otherwise}. \\
    \end{cases}
    $$
\end{enumerate}

\begin{wrapfigure}{r}{0.4\textwidth}
\centering
\includegraphics[width=\linewidth]{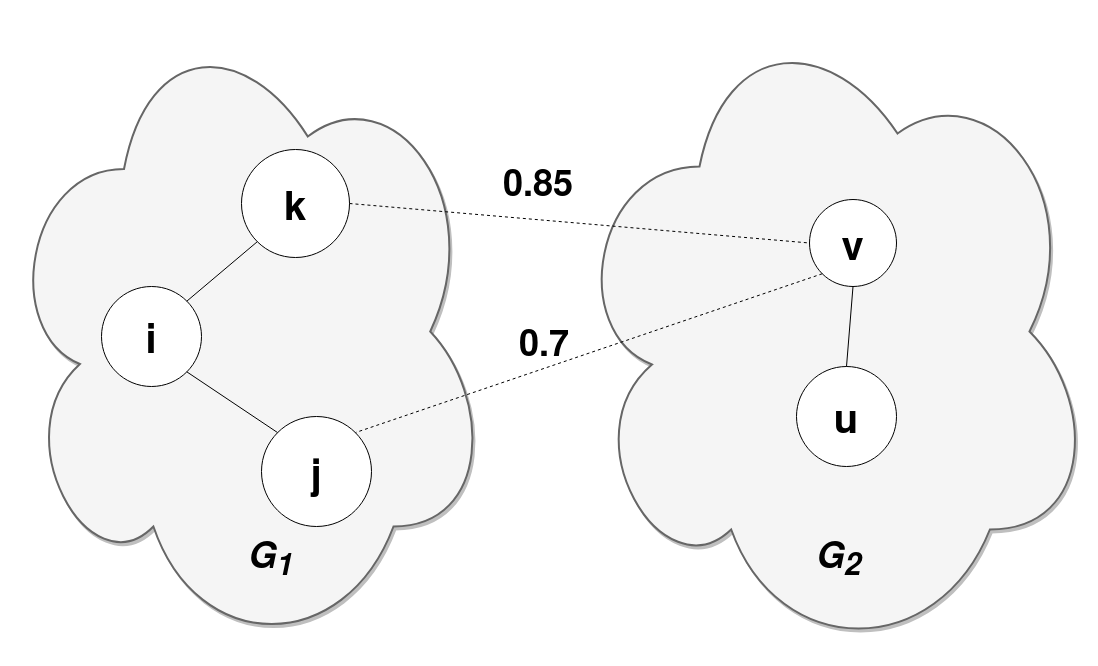}
\caption{Example of Rule 3}
\label{fig:rule3}
\end{wrapfigure}

\subsubsection{Rule 3 -- Prioritization} Given the first two rules, it is possible that a neighbor of $u_{(2)}$ is globally most similar to multiple neighbors of $i_{(1)}$ but with different similarities. For example, consider a sample graph shown in Figure~(\ref{fig:rule3}) where $v_{(2)}$ is globally most similar to both $j_{(1)}$ and $k_{(1)}$. Dashed lines indicate the similarities between two vertices.


\par In this case, we ought to  consider only the pair $(k_{(1)}, v_{(2)})$ and contribute its similarity to $\mathbf{S}_{i,u}$, which leads to the final rule.

\begin{enumerate}
    \item[] \textbf{Rule 3.} During the computation of the similarity between $i_{(1)}$ and $u_{(2)}$, a pair of cross-network neighbors with a \textbf{higher similarity should be given  prior consideration}.
\end{enumerate}

\subsubsection{
The Similarity Computation Algorithm}The general idea of the \algoname{elruna} is to update $\mathbf{S}$, $\mathbf{b1}$ and $\mathbf{b2}$ iteratively based on their values in the previous iteration. \textbf{The initial similarities between each pair of cross-network vertices are uniformly distributed. We set them all equal to $1$.} That is, $\mathbf{S}^{(0)}$ is a matrix of ones. Note that many other algorithms require prior knowledge about the similarities (usually based on non-network information) between vertices, whereas \algoname{elruna} does not require such knowledge.

\par The detailed algorithm is summarized in Algorithm (\ref{algo:ini_solution}). Overall, at the $(k-1)$th iteration, for each pair of cross-network vertices ($i_{(1)}, u_{(2)}$), we first check all pairs of their cross-network neighbors against Rule 2 to determine which pairs are qualified such that the similarity is greater than the contribution threshold of at least one node in the pair. Then we sort those pairs by similarities in descending order, which is needed to follow Rule 3. After sorting, we go over each ($j_{(1)}, v_{(2)}$) in the sorted order, checking $j_{(1)}$ and $v_{(2)}$ against Rule 1. If none of them has contributed to $\mathbf{S}^{(k)}_{i,u}$ before, we accumulate the similarity between $(j_{(1)}, v_{(2)})$ based on Rule 2 and mark $j_{(1)}$ and $v_{(2)}$ as \emph{selected}, which indicates that they can no longer be considered. This step enforces Rule 1. Note that the \textit{selected} neighbors will no longer be selected after we are done computing $\mathbf{S}^{(k)}_{i,u}$.  We now update $\textbf{S}^{(k)}$, $\textbf{b1}^{(k)}$, and $\mathbf{b2}^{(k)}$.

\par After accumulating similarities from neighbors, we normalize it by
$$
    \textbf{S}^{(k+1)}_{iu} = \frac{\text{accumulated similarity}}{\max \{ \sum_{j \in {N}(i)} \textbf{b1}^{(k)}_j, \sum_{v \in {N}(u)} \textbf{b2}^{(k)}_v \}}.
$$

$\textbf{S}^{(k+1)}_{iu}$ is set to $0$ if $\max \{ \sum_{j \in {N}(i)} \textbf{b1}^{(k)}_j, \sum_{v \in {N}(u)} \textbf{b2}^{(k)}_v \} = 0$. This normalization also penalizes the degree of discrepancy between $i_{(1)}$ and $u_{(2)}$ because the maximum number of pairs of cross-network neighbors that can contribute similarity to ($i_{(1)}, u_{(2)}$) is upper bounded by the smaller degree between $i_{(1)}$ and $u_{(2)}$.

\par The similarity computation step of \algoname{elruna} consists of running Algorithm (\ref{algo:node_coverage}) and (\ref{algo:ini_solution}), which outputs the final similarity matrix $\mathbf{S}$.

\begin{algorithm}[htb!]
\DontPrintSemicolon
\KwIn{${G}_1 = ({V}_1, {E}_1)$ , ${G}_2 = ({V}_2, {E}_2)$, $t_{max}, \textbf{T1}, \textbf{T2}$}

\KwOut{The similarity matrix $\textbf{S}$ }

\For{$k \gets 1$ \textbf{to} $t_{max}$} {
    $\text{\bf{S}}^{(k)} \gets n_1 \times n_2$ similarity matrix\;
    $\text{\bf{b1}}^{(k)} \gets n_1 \times 1$ vector with all entries equal to $-1$ \;
    
    $\text{\bf{b2}}^{(k)} \gets n_2 \times 1$ vector with all entries equal to $-1$ \;
    
    Update $\mathbf{c1}^{(k-1)}$ and $\mathbf{c2}^{(k-1)}$ based on Equation \ref{eq:ct} \;
    
    \For{$i$ \textbf{in} ${V}_1$} {
        \For{$u$ \textbf{in} ${V}_2$}{
            $\mathbf{e} \gets $ empty associative array \;
            $sum \gets$ $0$ \;
            \For{$j$ \textbf{in} ${N}(i)$}{
                \For{v \textbf{in} ${N}(u)$}{
                 \If{$\mathbf{S}^{(k-1)}_{jv} \geq \min \{ \mathbf{c1}^{(k-1)}_j, \; \mathbf{c2}^{(k-1)}_v\}$}{
                        $\mathbf{e}$[($j$, $v$)] $\gets$ $\mathbf{S}^{(k-1)}_{jv}$ \;
                    }
                }
            }
            $\mathbf{e} \gets $ sort by \textbf{\textit{value}} in descending order \;
            
            \For{$(j,v)$ \textbf{in} $\mathbf{e}$.\textbf{keys}}{
                \If{\text{$j$ and $v$ are not selected}}{
                    Accumulate $sum$ based on Rule 2 \;
                    Mark $j$ and $v$ as selected \;
                }
            }
            $\mathbf{S}^{(k)}_{iu} \gets \frac{sum}{\max \{ \sum_{j \in {N}(i)} \mathbf{b1}^{(k-1)}_j, \sum_{v \in {N}(u)} \mathbf{b2}^{(k-1)}_v \}}$ \;
            
            \If{$\mathbf{S}^{(k)}_{iu} > \mathbf{b1}^{(k)}_i$}{
                $\mathbf{b1}^{(k)}_i = \mathbf{S}^{(k)}_{iu}$
            }
            \If{$\mathbf{S}^{(k)}_{iu} > \mathbf{b2}^{(k)}_u$}{
                $\mathbf{b2}^{(k)}_u = \mathbf{S}^{(k)}_{iu}$
            }
        }
    }
}
\Return{\textbf{S}}
\caption{Similarity Computation}
\label{algo:ini_solution}
\end{algorithm}

We have added pictorial examples of Rule 1 and Rule 3 in the Appendix  for a better demonstration of the algorithm.

\subsection{Step 2: Building Alignment of Vertices}
After obtaining the final cross-network similarity matrix $\mathbf{S}$, we use two methods to extract the mappings between vertices from two networks:  \texttt{naive} and \texttt{seed-and-extend} alignment.

\begin{wrapfigure}{r}{0.47\textwidth}
    \centering
    \includegraphics[width=\linewidth]{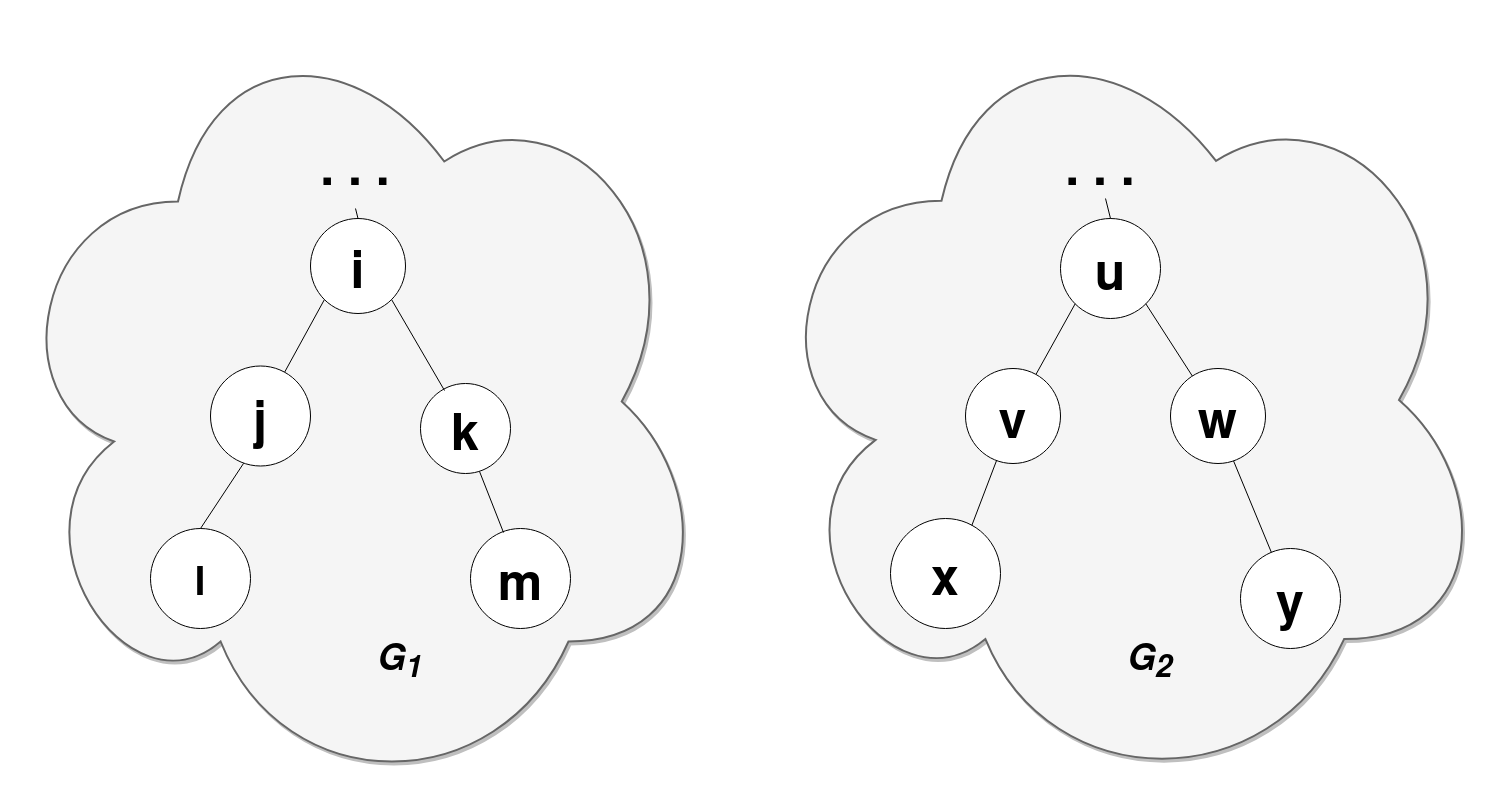}
    \caption{Failure of naive alignment  to distinguish symmetric nodes}
    \label{fig:naive_fail}
\end{wrapfigure}

\subsubsection{Naive Alignment}
Following the literature~\cite{final}, we sort all pairs of cross-network vertices by similarities in descending order. Then we iteratively align the next pair of unaligned vertices until all nodes in the smaller network are aligned. We note that this is a relatively simple alignment method, while many other algorithms~\cite{natali, netal, feizi2019spectral, modulealign, hubalign, netalign, ghost} use more complicated and computationally expensive alignment methods. As shown in the experimental section, however, using this naive alignment method, \algoname{elruna} significantly outperforms other baselines.

While the \texttt{naive} alignment method produces alignments of competitive quality, we observe that it fails to distinguish nodes that are topologically symmetric. An example is shown in  Figure (\ref{fig:naive_fail})  where ${G}_1$ and ${G}_2$ are isomorphic. In this circumstance, $k_{(1)}$ is equally similar to $v_{(2)}$ and $w_{(2)}$. At the same time, $m_{(1)}$ is equally similar to $x_{(2)}$ and $y_{(2)}$. If we break ties randomly,  it is possible that $k_{(1)}$ and $m_{(1)}$ may be mapped to vertices on different branches of the tree,  causing the loss of the number of conserved edges.

\subsubsection{Seed-and-Extend Alignment}
During the alignment process, nodes that have been aligned can serve as guidance for aligning other nodes~\cite{netal}. Referring back to  Figure (\ref{fig:naive_fail}), aligning $k_{(1)}$ to $w_{(2)}$ should imply that $m_{(1)}$ ought to be aligned with $y_{(2)}$ rather than $x_{(2)}$. To address the limitation of the \texttt{naive} alignment methods, we iteratively find the pair of \textbf{unaligned} nodes with the highest similarity. Then we align them and increase similarities between every pair of their \textbf{unaligned} cross-network neighbors by some small constant. Iterations proceed until all nodes in the smaller network are aligned. For efficiency, we use a red-black tree to store pairs of nodes. 

\subsection{Time Complexity of \algoname{elruna}}
Without loss of generality, we assume that two networks have a comparable number of vertices and edges. Let $n$ and $m$ denote the number of vertices and edges, respectively. Let $t_{max}$ denote the total number of iterations. 
One can easily see  that Algorithm (\ref{algo:node_coverage}) runs in $O(n^2 + mn)$ time and the naive alignment method runs in $O(n^2 \log n)$ time.
\begin{lemma}
The time complexity of Algorithm  (\ref{algo:ini_solution}) is $O(t_{max}m^2 \log n)$.
\end{lemma}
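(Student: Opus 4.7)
The plan is to analyze Algorithm~\ref{algo:ini_solution} loop by loop from the innermost work outward, and then sum over the outer iterations. The outermost loop runs $t_{max}$ times, so it suffices to show that one iteration of the outer loop costs $O(m^2 \log n)$; the claim then follows by multiplication.

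Fix an iteration $k$ and a pair $(i_{(1)}, u_{(2)})$. The block of code that handles this pair first scans every pair of cross-network neighbors $(j,v) \in N(i) \times N(u)$, performs a constant-time comparison against the contribution thresholds, and inserts qualifying pairs into the associative array $\mathbf{e}$. This costs $O(\deg(i)\deg(u))$. Next the pairs in $\mathbf{e}$ are sorted by similarity in descending order; since $|\mathbf{e}| \le \deg(i)\deg(u) \le n^2$, the sort takes $O(\deg(i)\deg(u)\,\log(\deg(i)\deg(u))) = O(\deg(i)\deg(u)\,\log n)$ time. The subsequent sweep over the sorted pairs, the normalization step, and the updates to $\mathbf{b1}^{(k)}_i$ and $\mathbf{b2}^{(k)}_u$ are all dominated by this bound. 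So the total work for the pair $(i_{(1)}, u_{(2)})$ in a single iteration $k$ is $O(\deg(i)\deg(u)\,\log n)$.

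Summing over the two middle loops, the cost of one iteration of the outer $k$-loop is bounded (up to constants) by
\begin{equation*}
\log n \cdot \sum_{i \in V_1}\sum_{u \in V_2} \deg(i)\deg(u)
\;=\; \log n \cdot \Bigl(\sum_{i \in V_1}\deg(i)\Bigr)\Bigl(\sum_{u \in V_2}\deg(u)\Bigr)
\;=\; \log n \cdot (2m_1)(2m_2) \;=\; O(m^2\log n),
\end{equation*}
using the handshaking identity and the assumption $m_1, m_2 = \Theta(m)$. Multiplying by the $t_{max}$ outer iterations gives the stated $O(t_{max}\, m^2 \log n)$ bound. The update of the contribution thresholds $\mathbf{c1}^{(k-1)}, \mathbf{c2}^{(k-1)}$ at the top of the $k$-loop is $O(n)$ per iteration and is absorbed.

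The only subtle point that needs care is the sorting factor: I do not want to argue $\log(\deg(i)\deg(u))$ pairwise and then somehow pull it outside the sum. The cleanest route is to uniformly bound $\log(\deg(i)\deg(u)) \le 2\log n$ before summing, which turns the sorting cost into a linear functional of $\deg(i)\deg(u)$ and makes the handshaking factorization immediate. With that, the rest is a one-line computation, and the lemma follows.
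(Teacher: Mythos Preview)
Your proof is correct and follows essentially the same approach as the paper: bound the per-pair cost by $O(\deg(i)\deg(u)\log n)$ via the sorting step, factor the double sum using the handshaking identity to get $O(m^2\log n)$ per outer iteration, and multiply by $t_{max}$. Your explicit remark about uniformly bounding $\log(\deg(i)\deg(u))\le 2\log n$ before summing is a nice touch that the paper leaves implicit.
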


\begin{proof}
Let $d_i$ denote the degree of vertex $i_{(1)}$. Operations on lines $12$ to $13$ and lines $19$ to $23$ take constant time; thus the nested for loops on lines $10$ to $13$ take $\Theta(d_i d_u)$ time for every pair of $i_{(1)}$ and $u_{(2)}$. Sorting on line $14$ takes $\Theta(d_i d_u \log d_i d_u)$ time, and the for loop on lines $15$ to $18$ takes $\Theta(d_i d_u)$ time. We observe that $O(\log d_id_u) = O(\log n^2) =  O(\log n)$; therefore, the outer nested for loop on lines $6$ to $23$ takes
\begin{align}
    O(\sum_{i \in {V}_1} \sum_{u \in {V}_2} d_i d_u \log(d_i d_u)) &= O(\log n \sum_{i \in {V}_1} d_i \sum_{u \in {V}_2} d_u) \\
    &= O(m^2 \log n). \nonumber
\end{align}
The time complexity of the Algorithm (\ref{algo:ini_solution}) is $O(t_{max} \cdot m^2 \log n)$. \end{proof}

\begin{lemma}
The time complexity of the \texttt{seed-and-extend} alignment method is $O((n^2 + mn) \log(n^2 + mn))$.
\end{lemma}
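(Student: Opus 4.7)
The plan is to count the total number of red-black-tree operations the algorithm performs, then multiply by the $O(\log N)$ cost per operation.

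First I would set up the bookkeeping. The red-black tree stores pairs from $V_1 \times V_2$ keyed by similarity and supports insert, delete, and extract-max in $O(\log N)$ time, where $N$ is the current size. Since the tree holds at most $n_1 n_2 = O(n^2)$ pairs at any moment, each operation costs $O(\log n^2) = O(\log(n^2 + mn))$, which is exactly the logarithmic factor in the claimed bound.

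Next I would tally the operations. The outer loop runs at most $n$ times (one per node of the smaller network). In the iteration that aligns $(i_{(1)}, u_{(2)})$, three kinds of work happen. First, one extract-max identifies the current best pair. Second, every still-present entry containing $i_{(1)}$ or $u_{(2)}$ is removed from the tree, so that neither vertex is considered again. Third, for each $(j_{(1)}, v_{(2)})$ with $j \in N(i_{(1)})$ and $v \in N(u_{(2)})$ that is still unaligned, the algorithm boosts $\mathbf{S}_{j,v}$ by the fixed small constant, which requires one delete plus one reinsert. Summed over all iterations, the extract-max calls contribute $O(n)$ operations; the deletions of the second type contribute $O(n^2)$ total, because each of the $O(n^2)$ original pairs is deleted at most once during the whole run.

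The main obstacle is bounding the neighbor-boost work of the third step. Writing $d_i$ for the degree of $i_{(1)}$ and $d_u$ for that of $u_{(2)}$, one iteration costs $O(d_i d_u)$ tree operations, so the global cost of this step is $\sum_{(i,u) \text{ aligned}} d_i d_u$. Because the alignment is injective, each vertex of $V_1$ and each vertex of $V_2$ appears in at most one aligned pair; using the trivial bound $d_u \leq n$,
\[
\sum_{(i,u) \text{ aligned}} d_i d_u \;\leq\; n \sum_{i \in V_1} d_i \;=\; 2 m_1 n \;=\; O(mn).
\]
Adding the three counts gives $O(n^2 + mn)$ total tree operations; multiplying by the per-operation cost $O(\log(n^2 + mn))$ yields the claimed $O((n^2 + mn)\log(n^2 + mn))$. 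As a final remark I would note that the one-time initial construction of the tree from the $n_1 n_2$ entries of $\mathbf{S}$ also fits within this budget.
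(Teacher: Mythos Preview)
Your argument is correct and follows essentially the same route as the paper: both proofs identify the neighbor-boost step as the dominant cost, bound its total work by $\sum_{(i,u)\text{ aligned}} d_i d_u \le n\sum_i d_i = O(mn)$ using injectivity of the alignment, and multiply the resulting $O(n^2+mn)$ operation count by the logarithmic per-operation cost. The only cosmetic difference is that the paper implements a similarity update by inserting a fresh entry (letting the tree grow to $O(n^2+mn)$ elements), whereas you model it as delete-then-reinsert (keeping the tree at $O(n^2)$); either choice yields the same bound.
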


\begin{proof}
At the beginning of the algorithm, adding all $n^2$ pairs of nodes into the red-black tree takes $O(n^2 \log n)$ time. Whenever we align a pair of vertices ($i_{(1)}, u_{(2)}$), we increase the similarities between all pairs of their unaligned cross-network neighbors. To update the corresponding similarities in the red-black tree, we add those pairs with new similarities into the tree. The total number of insertions of such pairs is 

$$
    \sum_{i_{(1)} \in {V}_1} d_i d_{f(i)},
$$
 where $f(i) \in {V}_2$ is the node that $i_{(1)}$ is aligned to. One can easily see that the function above is upper bounded by $O(mn)$, which implies that the total number of elements in the tree is $O(n^2 + mn)$. We perform at most $n^2 + mn$ number of \textit{find\_max}, \textit{deletion}, and \textit{insertion} operations; therefore, the overall running time of the algorithm is $O((n^2 + mn) \log(n^2 + mn))$. 
\end{proof}

The overall running time of the proposed \algoname{elruna} is $O(t_{max} \cdot m^2 \log n)$. If we assume that $O(m) = O(n \log n)$, which is a fair assumption for the networks that are used in the experiments, then the overall running time is $O(t_{max} \cdot n^2 \log^3 n)$.

\section{\algoname{rawsem} : Random-Walk-Based Selection Method}
In this section, we introduce the proposed selection rule \algoname{rawsem} for our local search procedure. We first discuss the baseline, which is used as the comparison method with \algoname{rawsem}. We then present the mechanism of \algoname{rawsem}.

\subsection{The Baseline}
The aim of the baseline is to further improve the objective value. Following the literature~\cite{shaydulin2018community}, given the permutation matrix produced by any alignment algorithm, the baseline algorithm constructs the search space for local search by selecting a subset of vertices from the smaller network and generating all permutations of their alignments while fixing the alignment of all other vertices that are not in the subset~\cite{shaydulin2018community} For each of the permutation of the alignment between selected vertices, the baseline selects the one that improves the objective the most. The baseline iterates until the objective has not been improved for a fixed number of iterations. 

\par Given ${G}_1 = ({V}_1, {E}_1)$ and ${G}_2 = ({V}_2, {E}_2)$, we transform the alignment matrix $\textbf{P}$ to the $n_1 \times 1$ alignment vector $\widetilde{\Pi}$ for which $\widetilde{\Pi}_i = u_{(2)}$ implies that vertex $i_{(1)}$ is aligned to vertex $u_{(2)}$.  
\par For each iteration, we randomly select a subset of vertices ${V}_1' \subset {V}_1$ with a fixed cardinality. Let ${V}_2' = \{ \widetilde{\Pi}_{i} : i \in {V}_1' \}$. Let $\mathbf{A}$ and $\mathbf{B}$ be the adjacency matrices of $G_1$ and $G_2$, respectively. The baseline local search explores all feasible solutions in the search space and attempts to find a new alignment vector $\Pi$ with a lower objective:
\begin{align}
    \min_{\Pi} & \{ - \frac{1}{2} \sum_{i,j \in {V}_1'} (\textbf{A}_{ij} \textbf{B}_{\Pi_i \Pi_j} + \sum_{k \in {N}(i) | k \notin {V}_1'} \textbf{A}_{ik} \textbf{B}_{\Pi_i \widetilde{\Pi}_k} + \\ \nonumber
    & \sum_{p \in {N}(j) | p \notin {V}_1'} \textbf{A}_{jp} \textbf{B}_{\Pi_j \widetilde{\Pi}_p}) \}\\ \nonumber
    & \text{s.t.} \;\; \Pi_i \in {V}_2' \;\; \forall i \in {V}_1' .\\ \nonumber
\end{align}

The baseline local search proceeds until the optimum has been reached, meaning that the objective has not been improved in over a number of iterations. 

\subsection{\algoname{rawsem} Algorithm}
\par Depending on the quality of the initial solution, it is possible that only a small fraction of vertices are not mapped optimally. Therefore, the baseline approach, which constructs the subset ${V}'_1$ with random selections from the entire vertex set, is not efficient. Ideally, we want to locate vertices that are not mapped optimally and only permute the alignments between them.

\par For simplicity, suppose nodes in ${V}_1$ are labeled with consecutive integers starting from $1$ and that nodes in ${V}_2$ are labeled with consecutive integers starting from $|{V}_1| + 1$. To quantify the level of mismatching of each vertex, we first define the concept of \textbf{violation}.

\begin{definition} [\textbf{Violation}]
 Let $\mathbf{o'}$ be an $n_1 \times n_2$ by $1$ vector. The \textbf{\textit{violation}} value of a vertex $i_{(1)} \in {G}_1$ is defined as 
 \begin{align} \label{eq:vio_i}
    \mathbf{o}'_i &= \sum_{j \in {N}(i)} (1 - \mathbf{B}_{\widetilde{\Pi}_i, \widetilde{\Pi}_j}) \\ &= |{N}(i)| - \sum_{j \in {N}(i)} \mathbf{B}_{\widetilde{\Pi}_i, \widetilde{\Pi}_j}. \nonumber
\end{align}
Informally, $\mathbf{o}'_i$ is the number of the neighbors of $i_{(1)}$ that are not conserved by $i_{(1)}$.

Let $\Bar{{V}}_2 = \{ u_{(2)} \in {V}_2 \; : \; \widetilde{\Pi}_i = u_{(2)} \; \exists \; i_{(1)} \in {V}_1 \}$ be the subset of ${V}_2$ consisting of all aligned vertices in $G_2$. Let $\widetilde{\Pi}^{-1} : \Bar{{V}}_2 \rightarrow {V}_1$ be the inverse mapping. Then the violation value of vertex $u \in {V}_2$ is defined in the same fashion:
\begin{align} \label{eq:vio_u}
    \mathbf{o}'_u & = \sum_{v \in {N}(u)} (1 - \mathbf{A}_{\widetilde{\Pi}^{-1}_u, \widetilde{\Pi}^{-1}_v})\\  &= |{N}(u)| - \sum_{v \in {N}(u)} \mathbf{A}_{\widetilde{\Pi}^{-1}_u, \widetilde{\Pi}^{-1}_v}. \nonumber
\end{align}

\end{definition}

We normalize violations of vertices by their degrees. Let $\mathbf{o}$ be a vector that encodes the normalized violation value of each vertex. Then we have

\begin{equation}
    \mathbf{o} = \mathbf{D}^{-1}\mathbf{o'},
\end{equation}
where $\mathbf{D}$ is the diagonal degree matrix such that $\mathbf{D}_{i,i}$ is the degree of vertex $i$. $\mathbf{o}$ provides the \textit{initial level of mismatching} of each vertex. We further normalize $\mathbf{o}$ such that its $L1$ norm equals  $1$.

\par From a high level, \algoname{rawsem} is a two-step procedure.

\begin{enumerate}
    \item[] (1) \textbf{Ranking vertices}: Starting from the initial level of mismatching of vertices, \algoname{rawsem} iteratively updates $\mathbf{o}$ in a \texttt{PageRank}~\cite{pagerank} fashion. Then, it ranks vertices by the  converged levels of mismatching.
    \item[] (2) \textbf{Local search}: The search space is constructed based on the ranking, and  the local search is performed.
\end{enumerate}

\subsubsection{Step 1 -- Ranking Vertices}
\begin{wrapfigure}{r}{0.35\textwidth}
    \centering
    \includegraphics[width=\linewidth]{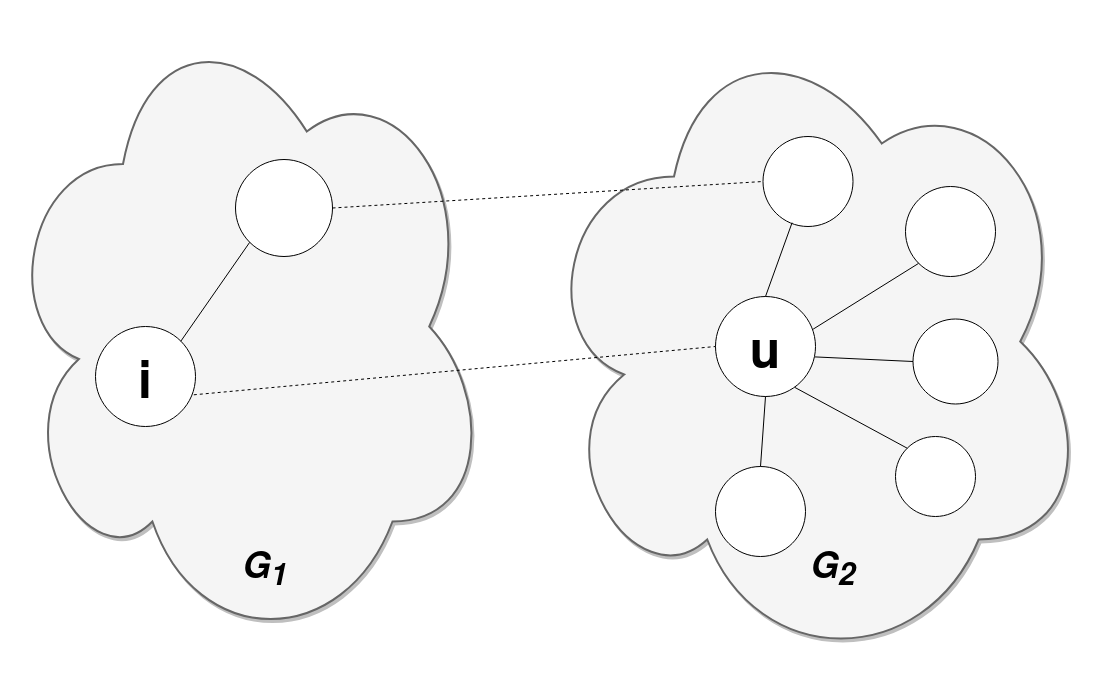}
    \caption{Zero violation of vertex $i$}
    \label{fig:zero_violation}
\end{wrapfigure}

For two real-world networks where isomorphism does not exist, we expect many vertices to have nonzero initial violations. Clearly, a nonzero violation does not imply a non-optimal mapping. We note that a zero violation does not always imply an optimal mapping. As shown in the Figure (\ref{fig:zero_violation}) where dashed lines indicate mappings, $i_{(1)}$ has a zero violation. This suggests the insufficiency of initial violation values. 

\begin{figure}[!h]
\centering
\includegraphics[width=1\linewidth]{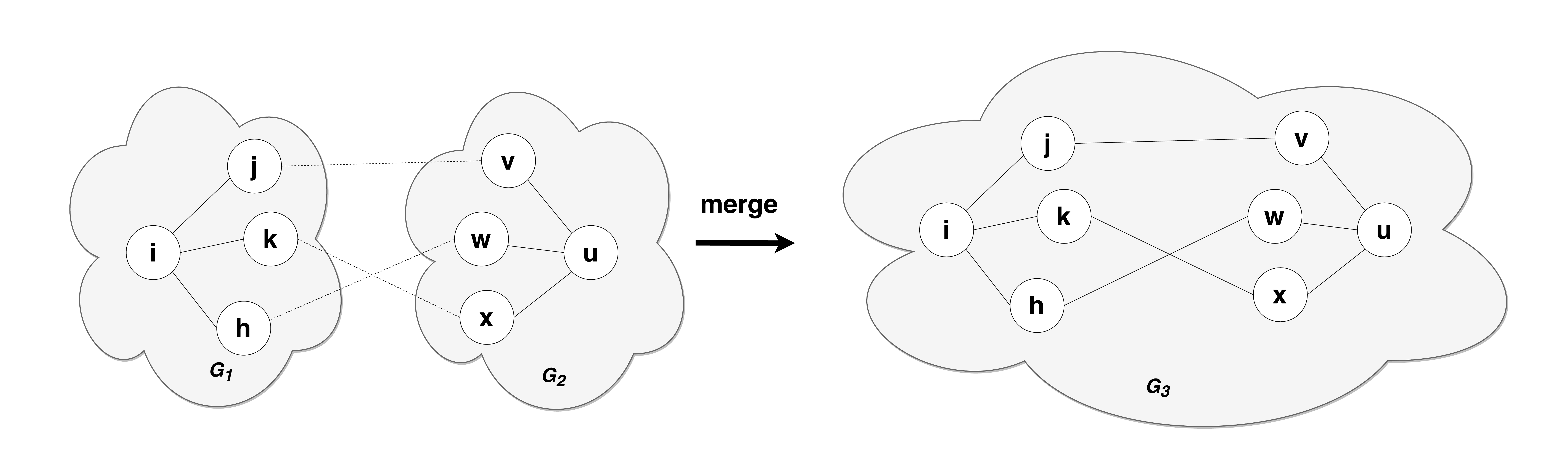}
\caption{Example of the merge operation}
\label{fig:merge}
\end{figure}

\par To adapt the network information into our model, we use an iterative approach based on the intuition that \emph{a vertex is more mismatched if its neighbors are more mismatched}. One immediate method is to propagate a violation value via edges throughout the networks. Let ${G}_2' = (\Bar{{V}}_2, \Bar{{E}}_2)$ be the subgraph induced by $\Bar{{V}}_2$ (as stated above, $\Bar{{V}}_2$ is the set of aligned vertices in $G_2$). To start with, we merge ${G}_1$ and ${G}_2'$ by adding edges to connect aligned cross-network vertices. Figure (\ref{fig:merge}) illustrates a example of the merge operation. We denote the newly constructed undirected graph ${G}_3 = ({V}_3, {E}_3)$, where ${V}_3 = {V}_1 \cup \Bar{{V}}_2$ and ${E}_3 = {E}_1 \cup \Bar{{E}}_2 \cup \{ (i_{(1)}, u_{(2)})  :  i_{(1)} \in {V}_1, u_{(2)} =  \widetilde{\Pi}_i\}$. Let $\textbf{C}$ denote the adjacency matrix of ${G}_3$. Let $\textbf{R}$ be the vector that encodes the propagated levels of mismatching for each vertex. Let \textbf{D} denote the diagonal degree matrix such that $\textbf{D}_{i,i} = |{N}(i)|$. We propagate violations in a \texttt{PageRank}~\cite{pagerank} fashion:

\begin{equation}
  \textbf{R}^{(k)}_i = \alpha \sum_{j} \frac{\textbf{C}_{ij}}{\textbf{D}_{j,j}} \textbf{R}^{(k-1)}_j + (1 - \alpha) \textbf{o}_i. 
\end{equation}

In matrix notation, this is

\begin{equation} \label{eq:pr}
    \textbf{R}^{(k)} = \alpha \textbf{CD}^{-1}\textbf{R}^{(k-1)} + (1-\alpha)\textbf{o} .
\end{equation}

By initializing $\mathbf{R}^{(0)}$ as a probability vector, we can rewrite Equation (\ref{eq:pr}) to

\begin{equation} \label{eq:pr_final}
    \textbf{R}^{(k)} = [\alpha \textbf{CD}^{-1} + (1 - \alpha ) \textbf{o} \textbf{1}^T] \; \textbf{R}^{(k-1)},
\end{equation}
where $\mathbf{1}$ is the vector with all entries equal to $1$.

Equation (\ref{eq:pr_final}) encodes an eigenvalue problem that can be approximated by power iteration. Let $\textbf{E} = \alpha \textbf{CD}^{-1} + (1 - \alpha ) \textbf{o} \textbf{1}^T$. By the undirected nature of ${G}_3$, the transition matrix $\textbf{C}\textbf{D}^{-1}$ is \textit{irreducible}; therefore $\textbf{E}$ is a left-stochastic matrix with a leading eigenvalue equal to $1$, and the solution of Equation (\ref{eq:pr_final}) is the principal eigenvector of $\textbf{E}$. Moreover, $\textbf{\textbf{E}}$ is also primitive; therefore the leading eigenvalue of $\textbf{E}$ is unique, and the corresponding principal eigenvector can be chosen to be strictly positive. As a result, the power iteration converges to its principal eigenvector. 

\par The violation vector $\textbf{o}$ plays the role of teleportation distribution, which encodes external influences on the importance of vertices. The converged $\textbf{R}$ gives the levels of mismatching of vertices, and a vertex with a higher value is even more mismatched. We rank vertices in $G_1$ based on $\mathbf{R}$ in descending order.

\subsubsection{Step 2 -- Local Search}
Based on the ranking produced in the previous step, \algoname{rawsem} uses a sliding window over sorted vertices to narrow down the search space. Let $m$ be the size of the window with the tail lying at the highest-ranked vertices. For each iteration, we construct ${V}'_1$ by randomly selecting vertices within the window and perform the local search. If the objective has not been improved in $s$ iterations, we move the window $l$ nodes forward, then continue the local search procedure. The local search  terminates when the objective has not been improved for $s_{max}$ number of iterations. In our experiments, we set $|{V}_1| = 6$.

\section{Experimental Results}
\par In this section, we first present the experimental setup and performance of the proposed \algoname{elruna} in comparison with $8$ baseline methods over three alignment scenarios. Then, we study the time-quality trade-off and the scalability of \algoname{elruna}. \emph{We emphasize that \algoname{elruna} is not a local search method and should not be compared with local search methods because \algoname{elruna} can serve as a preprocessing step to all local searches.} We demonstrate that \algoname{Rawsem} could further improve the alignment quality with significantly fewer iterations than  the naive local search method requires.

\par {\bf Reproducibility:} Our source code, documentation, and data sets are available at \url{https://tinyurl.com/uwn35an}.
\par \textbf{Baselines for \algoname{elruna}}. We compare \algoname{elruna} with 8 state-of-the-art network alignment algorithms: \texttt{IsoRank} \cite{isorank}, \texttt{Klau} \cite{natali}, \texttt{NetAlign} \cite{netalign}, \texttt{REGAL} \cite{heimann2018regal}, \texttt{EigenAlign} \cite{feizi2019spectral}, \texttt{C-GRAAL} \cite{memivsevic2012c}, \texttt{NETAL} \cite{netal}, and \texttt{HubAlign} \cite{hubalign}. These algorithms, published in years 2008--2019, have proven superior to many other methods, so we  choose them. Other methods that perform in significantly longer running time have not been considered.
The baseline methods are described in the Related Work section. \texttt{C-GRAAL}, \texttt{Klau} and \texttt{Netalign} require prior similarities between cross-network vertices as input. Following \cite{final, heimann2018regal}, we use the degree of similarities as the prior similarities. For \texttt{Klau} and \texttt{Netalign}, as suggested by \cite{heimann2018regal}, we construct the prior alignment matrix by choosing the highest $k \times \log_2 n$ vertices, where $k = 5$. Note that \textbf{\algoname{elruna} does not requires prior knowledge about the similarities between cross-network vertices}.

\par We do not compare \algoname{elruna} with \texttt{FINAL} \cite{final} because \texttt{FINAL} solves a different problem, namely, the attributed network alignment problem. \texttt{ModuleAlign} \cite{modulealign} is also not chosen as a baseline method because it is the same as \texttt{HubAlign} \cite{hubalign} except that \texttt{ModuleAlign} uses a different method to optimize the biological similarities between vertices. Additionally, we do not compare with $\texttt{GHOST}$ \cite{ghost} because its signature extraction step took hours even for small networks.

\par \textbf{Experimental Setup for \algoname{elruna}}. All experiments were performed on an Intel Xeon  E5-2670 machine with 64 GB of RAM. For the sake of iterative progress comparison, we set the maximum number of iterations $t_{max}$ to the larger diameter of the two networks.
\par Our experiment consists of three scenarios: (1) \textit{\textbf{self-alignment without and under the noise}}, (2) \textit{\textbf{alignment between homogeneous networks}}, and (3) \textit{\textbf{alignment between heterogeneous networks}}. Detailed descriptions of each category are presented   later. In general, the first test scenario \textit{self-alignment without and under the noise} consists of 12 networks from various domains. For each network, we generate up to 14 noisy copies with increasing noise levels (defined later) up to $25\%$. This gives us a total of $1,64$ pairs of a network to align. The second test scenario \textit{alignment between homogeneous networks} has $3$ pairs of networks for which each pair consists of two subnetworks of a larger network. In our third test scenario \textit{alignment between heterogeneous networks}, we align 5 pairs of networks where each pair consists of two networks from different domains. We note that the third scenario is usually used by attributed network alignment algorithms. Therefore, it is not exactly what we solve with our formulation, but  we demonstrate the results because it is an  important practical task. To the best of our knowledge, our experimental setup is the most comprehensive in terms of the combination of the number of baselines, the number of networks, the categories of testing cases, and the levels of noise applied. 

\par \textbf{Evaluation Metric.} To quantify the alignment quality, we use two well-known metrics: the \textit{edge correctness} (\textit{EC}) \cite{netal} and the \textit{symmetric substructure score} ($S^3$) \cite{saraph2014magna}. Let $f(V_1) = \{u \in V_2 : \mathbf{P}_{i,u} = 1, \exists i \in V_1 \}$, and $f(E_1) = |\{ (f(i_{(1)}), f(j_{(1)})) \in E_2 \; : \; (i_{(1)}, j_{(1)}) \in E_1 \}|$. That is, $f(V_1)$ is the set of vertices in $G_2$ that are aligned (note that since we assume $|V_1| \leq |V_2|$, some vertices in $G_2$ are left unaligned), and $f(E_1)$ is the set of edges in $G_2$ such that for each edge, the alignment of its incident vertices is adjacent in $G_1$. Then, we have 

\begin{equation}
    EC = \frac{|f(E_1)|}{|E_1|}
\end{equation}

and 

\begin{equation}
    S^3 = \frac{|f(E_1)|}{|E_1| + |E(G_2[f(V_1)])| - |f(E_1)|},
\end{equation}

where $|E(G_2[f(V_1)])|$ is the number of edges in the subgraph of $G_2$ induced by $f(V_1)$. 

\subsection{Self-alignment without and under the Noise} 
In this experiment, we analyze how \algoname{elruna} performs with structure noise being added to the original network. Given a target network $G_1$, simply aligning $G_1$ with its random permutations is not a challenge for most of the existing state-of-the-art network alignment algorithms. A more interesting test case arises when we try to align the original networks $G_1$ with its \textbf{noisy} permutations $G_2$ such that $G_2$ is a copy of $G_1$ with additional edges being added \cite{netal, bigalign, final, yasar2018iterative, heimann2018hashalign}. This scenario is even more challenging when the number of noisy edges is large with respect to the number of edges in $G_1$. In addition, this perturbation approach  reflects many real-life network alignment task scenarios \cite{netalign, heimann2018hashalign, heimann2018regal}.

\par Given the network ${G}_1 = ({V}_1, {E}_1)$, a noisy permutation of ${G}_1$ with noise level $p$, denoted by ${G}^{(p)}_2 = ({V}_2, {E}_2)$, is created with two steps:
\begin{enumerate}
    \item Permute ${G}_1$ with some randomly generate permutation matrix.
    \item Add $p|{E}_1|$ edges to ${G}_1$ uniformly at random by randomly connecting nonadjacent pairs of vertices. 
\end{enumerate}

Properties of each network $G_1$ are given in Table \ref{tab:self}. Note that under this model, the highest EC any algorithm can achieve by aligning ${G}_1$ and ${G}_2$ is always $1$. \emph{In this experiment, we demonstrate the results {\bf without} applying local search; that is,  we demonstrate how only \algoname{elruna} outperforms the  state-of-the-art methods.} 

\begin{center}
\captionof{table}{Networks for test case: \textit{self-align without and under noise}} \label{tab:self} 
 \begin{tabular}{||l c c c||} 
 \hline
 \textbf{Domain} & $n$ & $m$ & \textbf{label}\\ [0.5ex] 
 \hline 
 Barabasi random network & 400 & 2,751 & \texttt{barabasi} \\ 
 \hline
 
 Homle random network& 400 & 2,732 & \texttt{homel} \\ \hline
 
 Coauthorships & 379 & 914 & \texttt{co-auth\_1} \\ \hline
 
 Gene functional association & 993 & 1,300 & \texttt{bio\_1} \\ \hline

 Economy & 1,258 & 7,513 & \texttt{econ} \\ \hline
 
 Router & 2,113 & 6,632 & \texttt{router} \\ \hline
 
 Protein-protein interaction & 2,831 & 4,562 & \texttt{bio\_2} \\ \hline
 
 Twitter & 4,171 & 7,059 & \texttt{retweet\_1} \\ \hline
 
 Erods collaboration & 5,019 & 7,536 & \texttt{erdos} \\ \hline
 
 Twitter & 7,252 & 8,061 & \texttt{retweet\_2} \\ \hline
 
 Social interaction & 10,680 & 24,316 & \texttt{social} \\ \hline
 
 Google+ & 23,628 & 39,242 & \texttt{google+} \\ \hline

\end{tabular}

\end{center}

\par To demonstrate the effectiveness of \algoname{elruna}, we first solve the alignment problems on random networks generated by Bar\`abasi--Albert preferential attachment  model (BA model) \cite{barabasi} and the Holme-Kim model (HK model) \cite{holme}; see Table \ref{tab:self}. The HK model reinforces the BA model with an additional probability $q$ of creating a triangle after connecting a new node to an  existing node. In our experiment we set $q=0.4$. For each of the random networks, we generate $12$ noisy permutations with increasing noise level $p$ from 0 to 0.21. We then align ${G}_1$ with each of its noisy permutations using  \algoname{elruna} and the baseline algorithms. \algoname{elruna} has two versions, \texttt{\algoname{elruna}\_Naive} and \texttt{\algoname{elruna}\_Seed}, which differ by the alignment methods we use. The results are summarized in Figures \ref{fig:barabasi} and \ref{fig:homle}. 

\par Next, we solve the network alignment problem on 10 real-world networks \cite{nr-sigkdd16, leskovec2012learning, boguna2004models} from various domains. The details of the selected networks are shown in Table \ref{tab:self}. For each network, we use the same model to generate $14$ noisy permutations with increasing noise level $p$ from $0$ up to $0.25$. That is, $G_2^{(0.25)}$ has added additional $25\%$ noisy edges added to $G_1$. We then align $G_1$ with each of its permutation using  \algoname{elruna} and the baseline algorithms. The results are summarized in Figures \ref{fig:newman}--\ref{fig:google}. 

Similar to most existing state-of-the-art network alignment algorithms' testsets, the networks in our testset are sparse. This is one of the common limitations of the similarity matrix based approaches. Aligning dense graphs implies using dense similarity matrices (or comparable data structures that replace these matrices). Implementing scalable alignment algorithms for dense graphs that actually effectively manipulate dense similarity matrices requires a lot of effort (such as extremely fast matrix manipulations, parallelization, and specific dense data structures that are potentially different for different hardware) that is beyond the scope of this work because the focus will actually be more on the performance rather than the trade-off of quality and performance. To the best of our knowledge, the most competitive network alignment approaches are lack of effective implementation for dense graphs. 

\par \textbf{Results}. Clearly,  \algoname{elruna} significantly outperforms all baselines on all networks. In particular, both versions of \algoname{elruna} (\texttt{\algoname{elruna}\_Naive} and \texttt{\algoname{elruna}\_Seed}) outperform 6 existing methods---\texttt{REGAL}, \texttt{EigenAlign}, \texttt{Klau}, \texttt{IsoRank}, \texttt{C-GRAAL}, and \texttt{NetAlign}---by an order of magnitude under high noise levels. For the other two baselines, \texttt{NETAL} and \texttt{HubAlgin}, \algoname{elruna} also produces much better results than they do, with improvement up to $60\%$ under high noise levels. At the same time, both versions of \algoname{elruna} are robust to noise such that they output high-quality alignment even when $p$ reaches $0.25$. Noisy edges change the degrees of vertices by making them more uniformly distributed; in other words, they perform a process that can be viewed as network anonymization. This experiment demonstrates \texttt{ELRUNA}'s superiority in identifying the hidden isomorphism between two networks.

\par We observe that the alignment quality of \texttt{Klau} and \texttt{NetAlign} is similar. Such behaviors were also reported in other literature \cite{heimann2018regal, yasar2018iterative}. We also note that the trends of $EC$ and $S^3$ are almost the same. \texttt{EigenAlign} crashed on \texttt{bio\_1} (for $p > 0.17$), \texttt{bio\_2}, \texttt{econ} (for $p > 0.05$), \texttt{router}, and \texttt{erdos} networks, and it took over $23$ hours to even run on one instance of \texttt{social}, \texttt{google+}, and \texttt{retweet\_2} networks. \texttt{C-GRAAL} crashed on \texttt{econ} (ran successfully only on $p = 0.3$), \texttt{google+}, and \texttt{social} networks. \texttt{HubAlign} crashed on $\texttt{google+}$ networks.

\begin{figure}[!h] 
    \centering
    \includegraphics[width=0.95\linewidth]{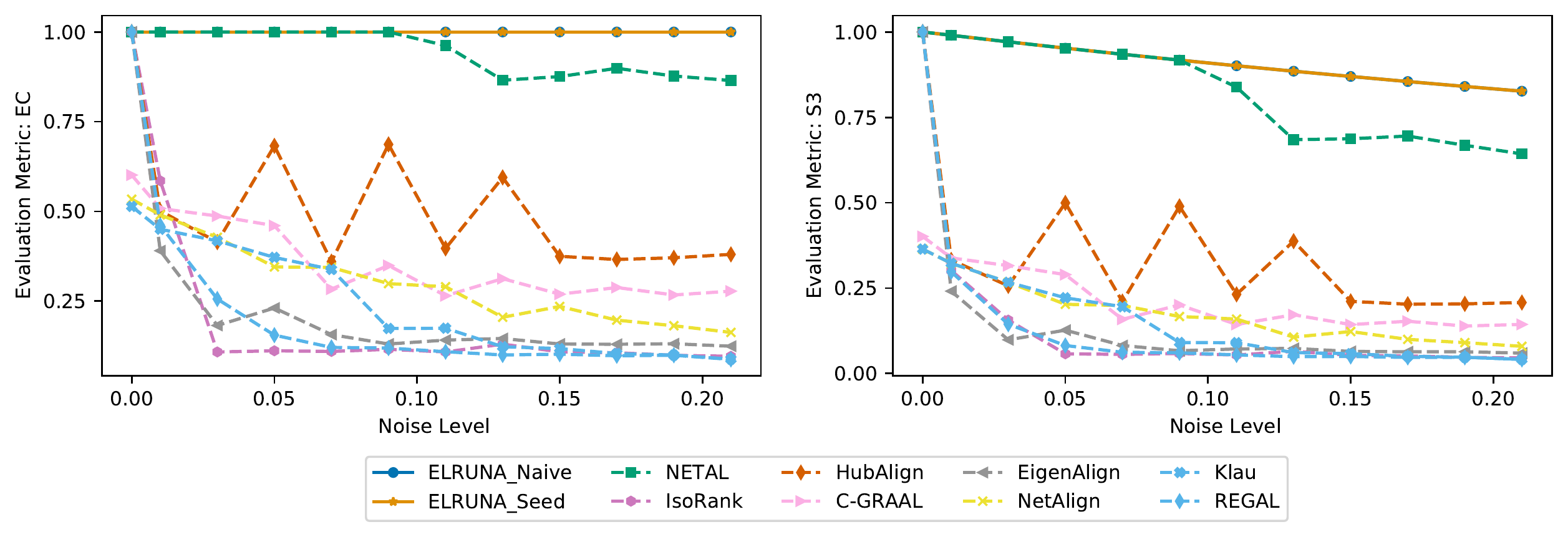}
    \caption{Alignment quality comparison on \texttt{barabasi} network}
    \label{fig:barabasi}
\end{figure}

\begin{figure}[!h] 
    \centering
    \includegraphics[width=0.95\linewidth]{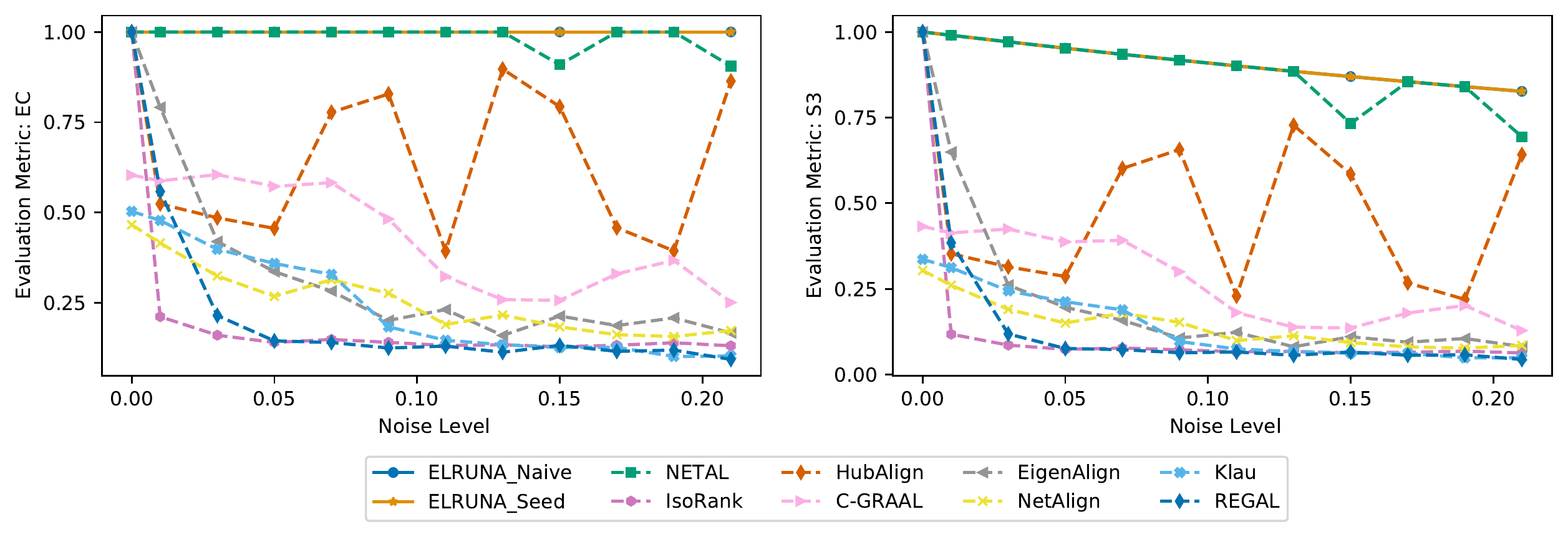}
    \caption{Alignment quality comparison on \texttt{homle} network}
    \label{fig:homle}
\end{figure}

\begin{figure}[!h] 
    \centering
    \includegraphics[width=1\linewidth]{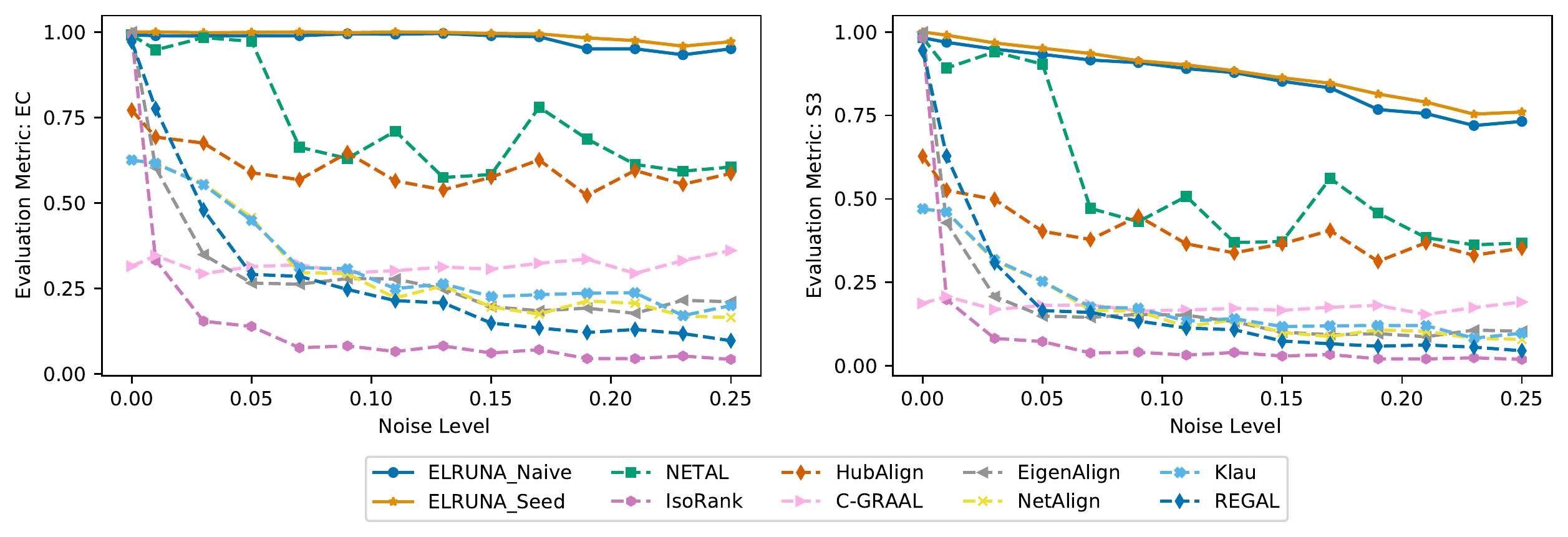}
    \caption{ Alignment quality comparison on \texttt{co-auth\_1} network}
    \label{fig:newman}
\end{figure}

\begin{figure}[!h] 
    \centering
    \includegraphics[width=1\linewidth]{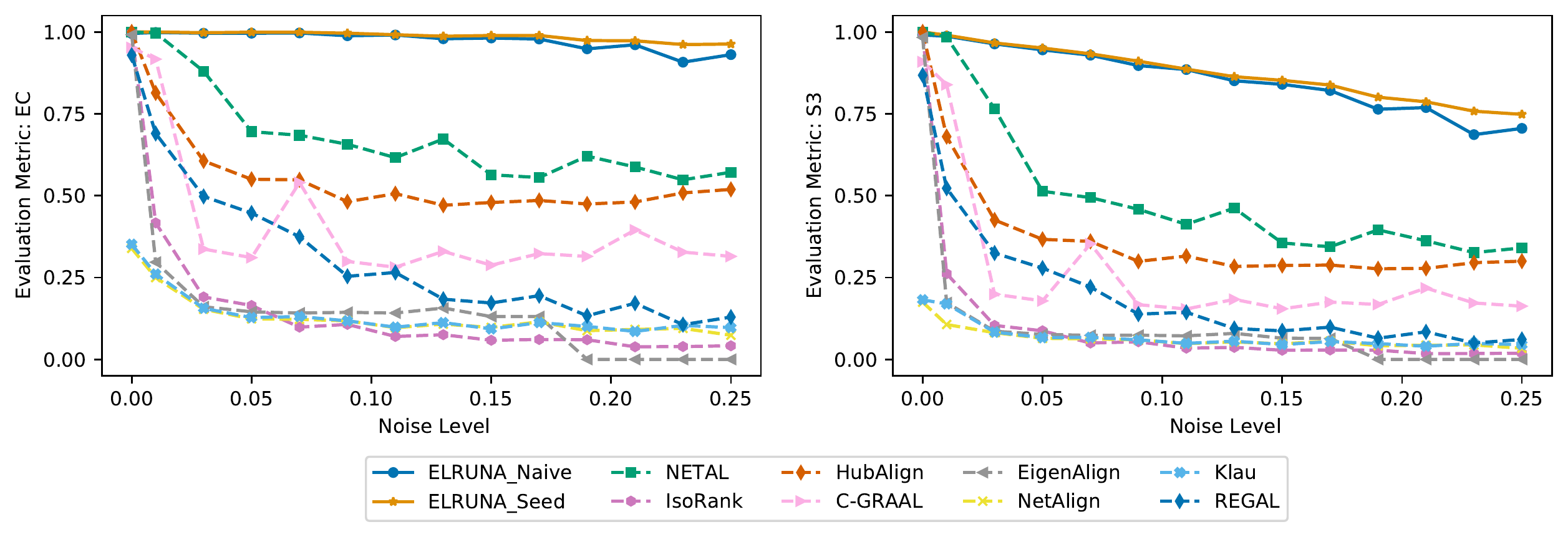}
    \caption{Alignment quality comparison on \texttt{bio\_1} network}
    \label{fig:bio}
\end{figure}

\begin{figure}[!h] 
    \centering
    \includegraphics[width=1\linewidth]{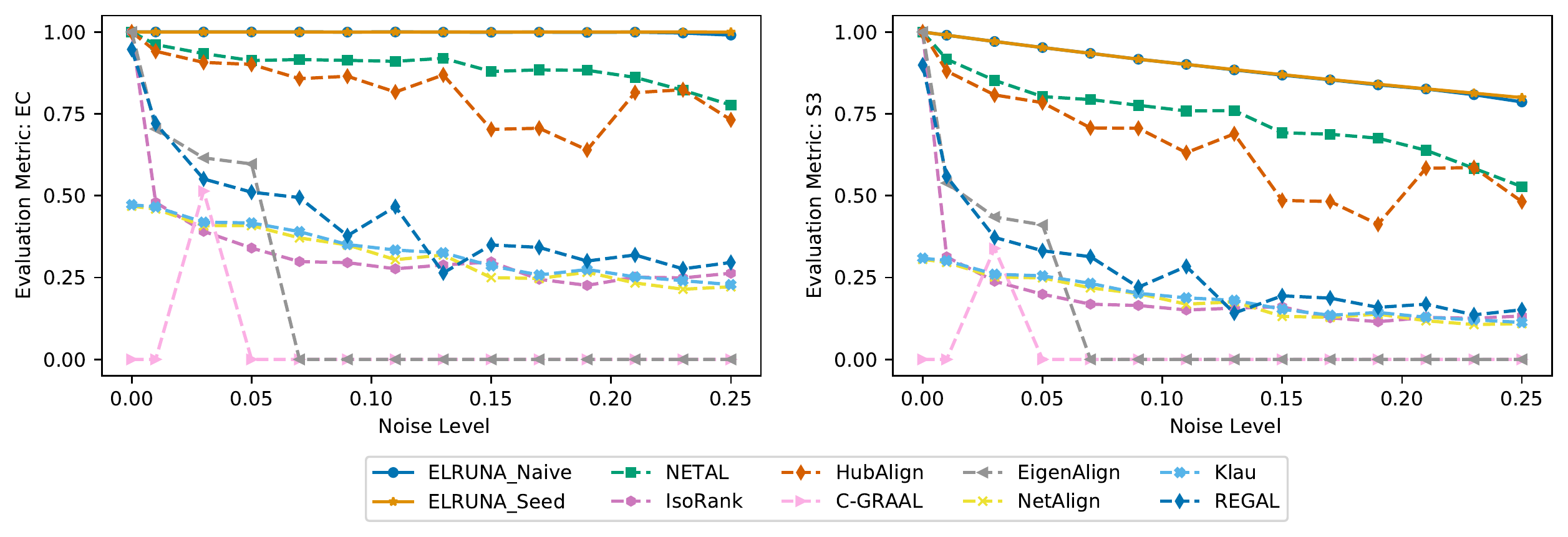}
    \caption{Alignment quality comparison on \texttt{econ} network}
    \label{fig:econ}
\end{figure}

\clearpage

\begin{figure}[!h] 
    \centering
    \includegraphics[width=1\linewidth]{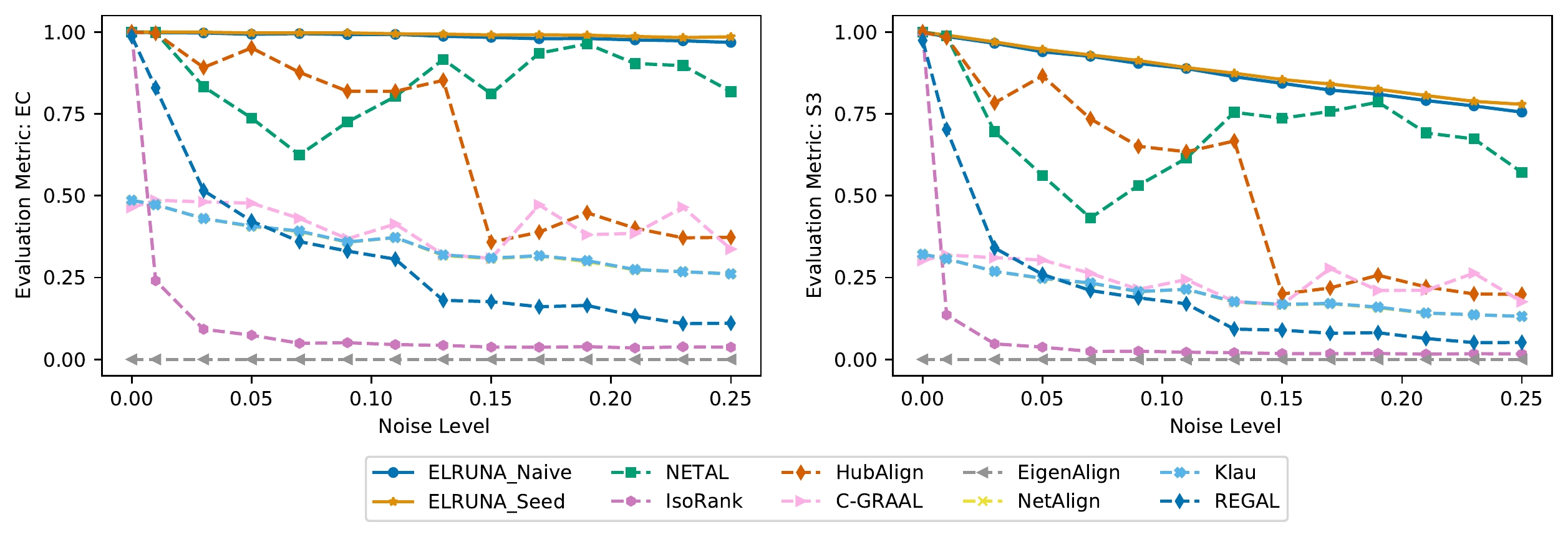}
    \caption{Alignment quality comparison on \texttt{router} network}
    \label{fig:router}
\end{figure}

\begin{figure}[!h] 
    \centering
    \includegraphics[width=1\linewidth]{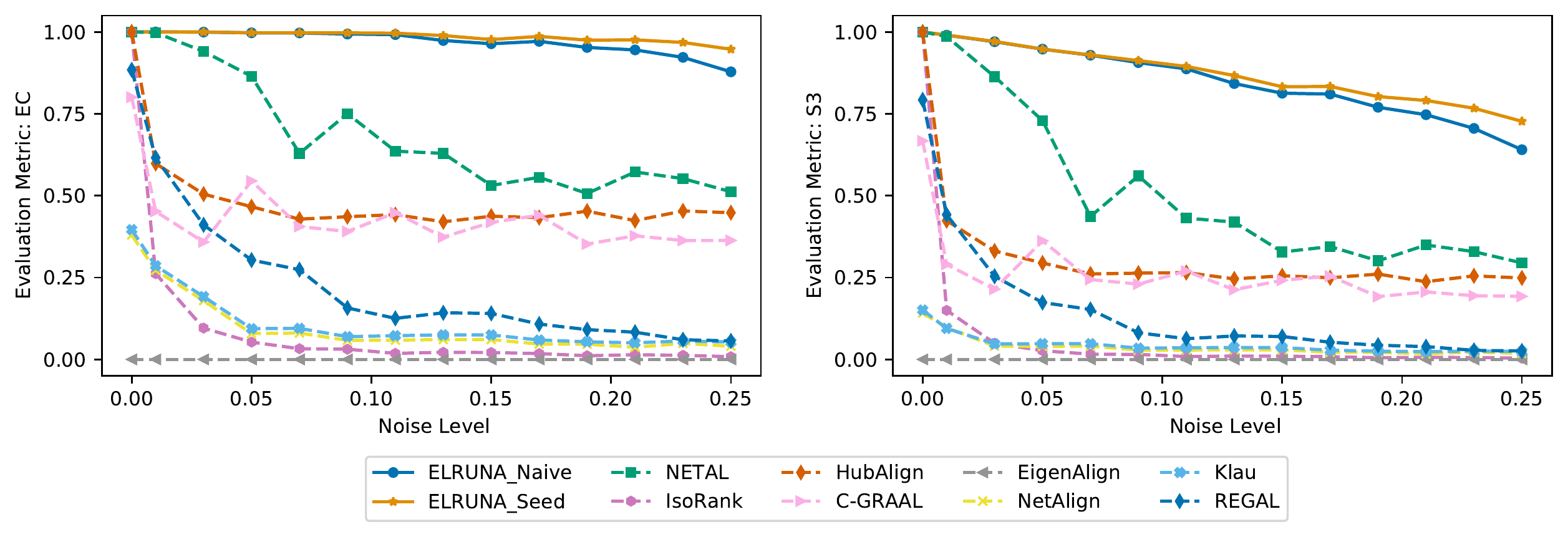}
    \caption{Alignment quality comparison on \texttt{bio\_2} network}
    \label{fig:bio2}
\end{figure}

\begin{figure}[!h] 
    \centering
    \includegraphics[width=1\linewidth]{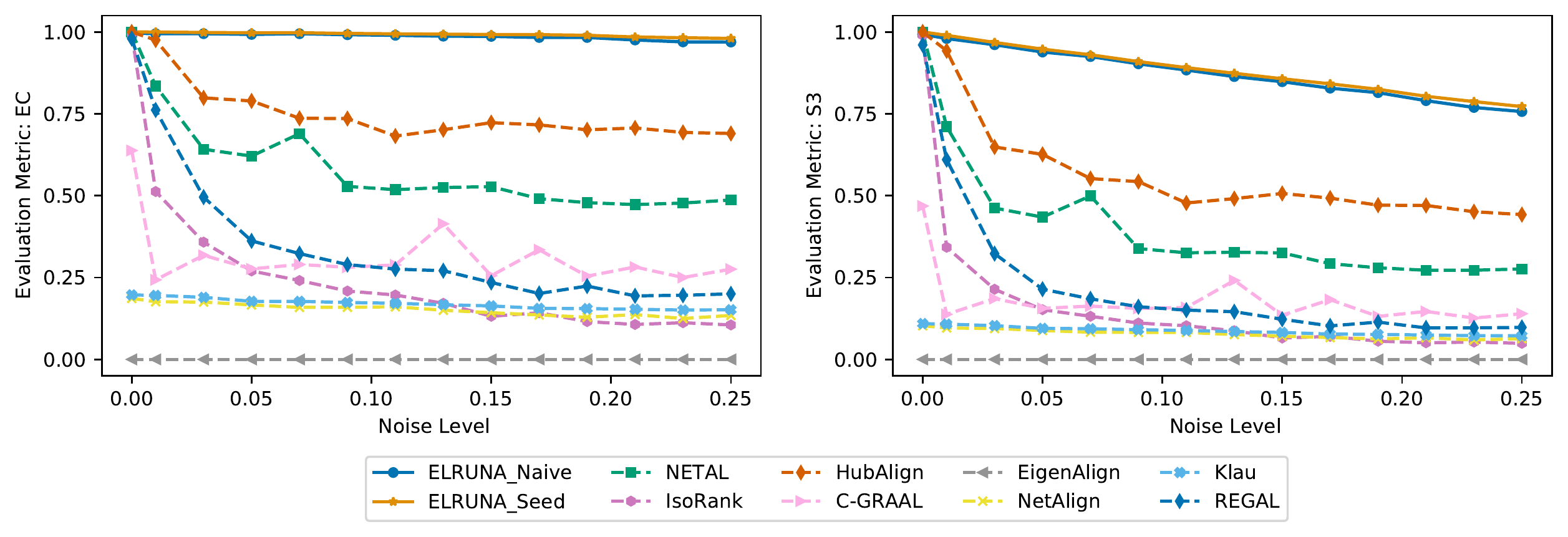}
    \caption{Alignment quality comparison on \texttt{retweet\_1} network}
    \label{fig:retweet}
\end{figure}

\begin{figure}[!h] 
    \centering
    \includegraphics[width=1\linewidth]{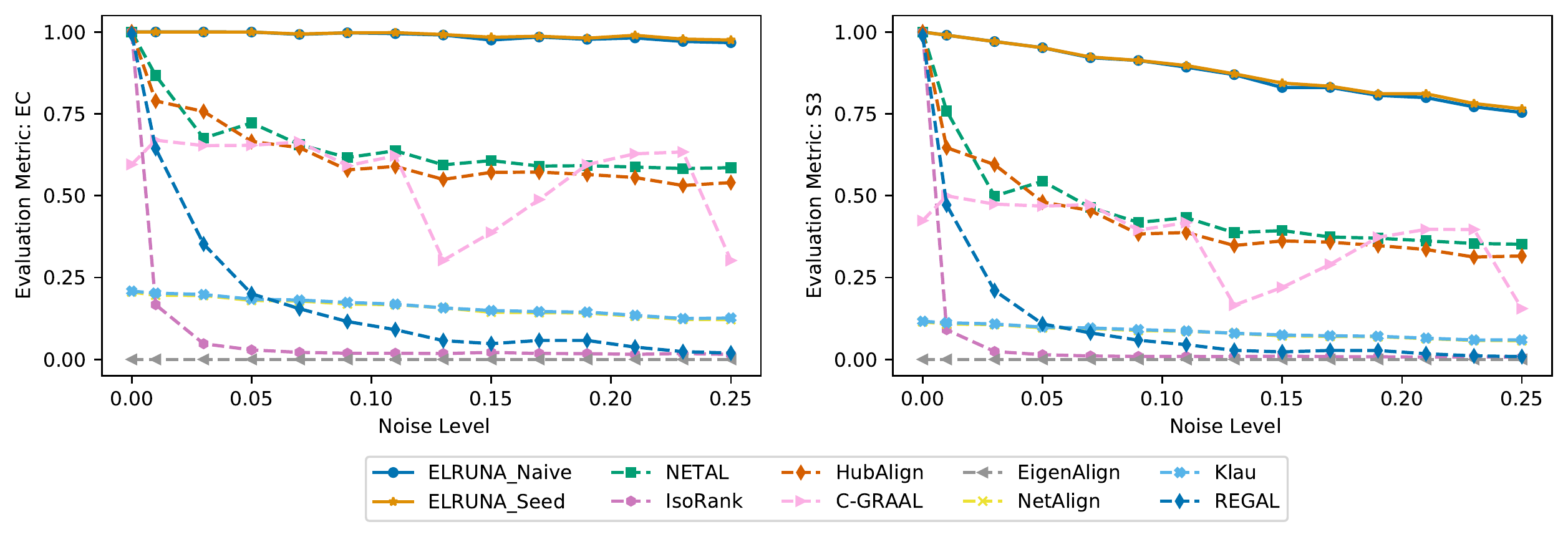}
    \caption{Alignment quality comparison on \texttt{erdos} network}
    \label{fig:erdos}
\end{figure}

\begin{figure}[!h] 
    \centering
    \includegraphics[width=1\linewidth]{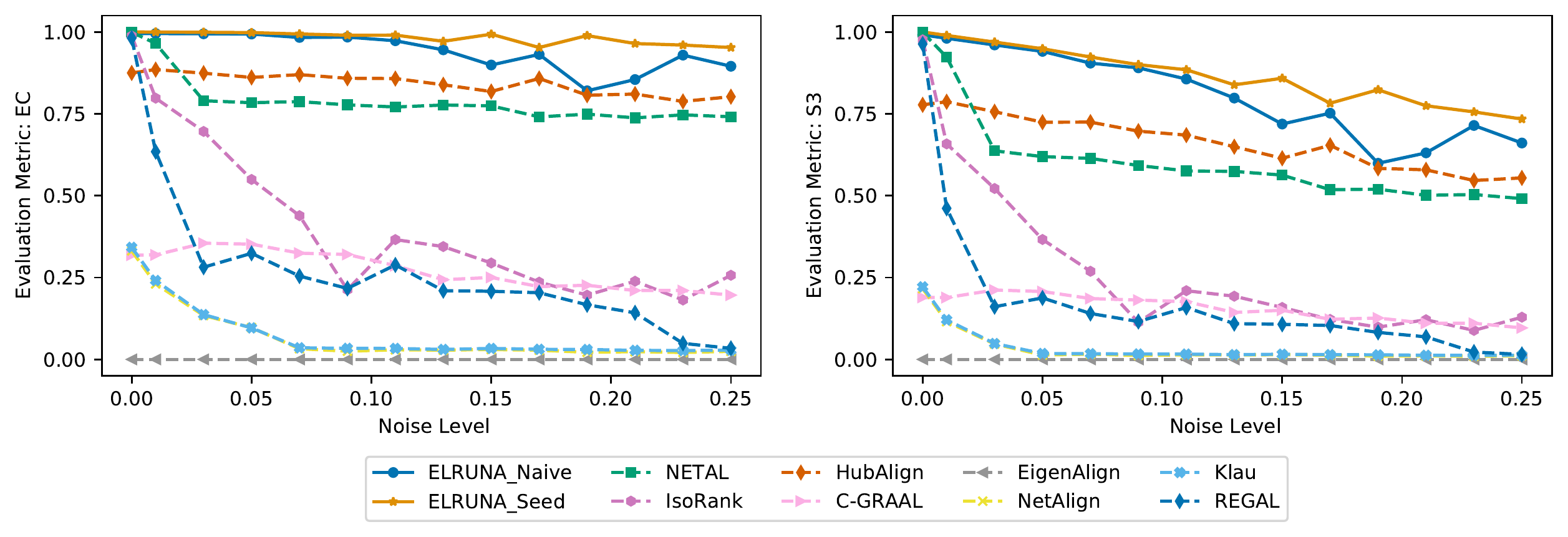}
    \caption{Alignment quality comparison on \texttt{retweet\_2} network}
    \label{fig:retweet2}
\end{figure}

\begin{figure}[!h] 
    \centering
    \includegraphics[width=1\linewidth]{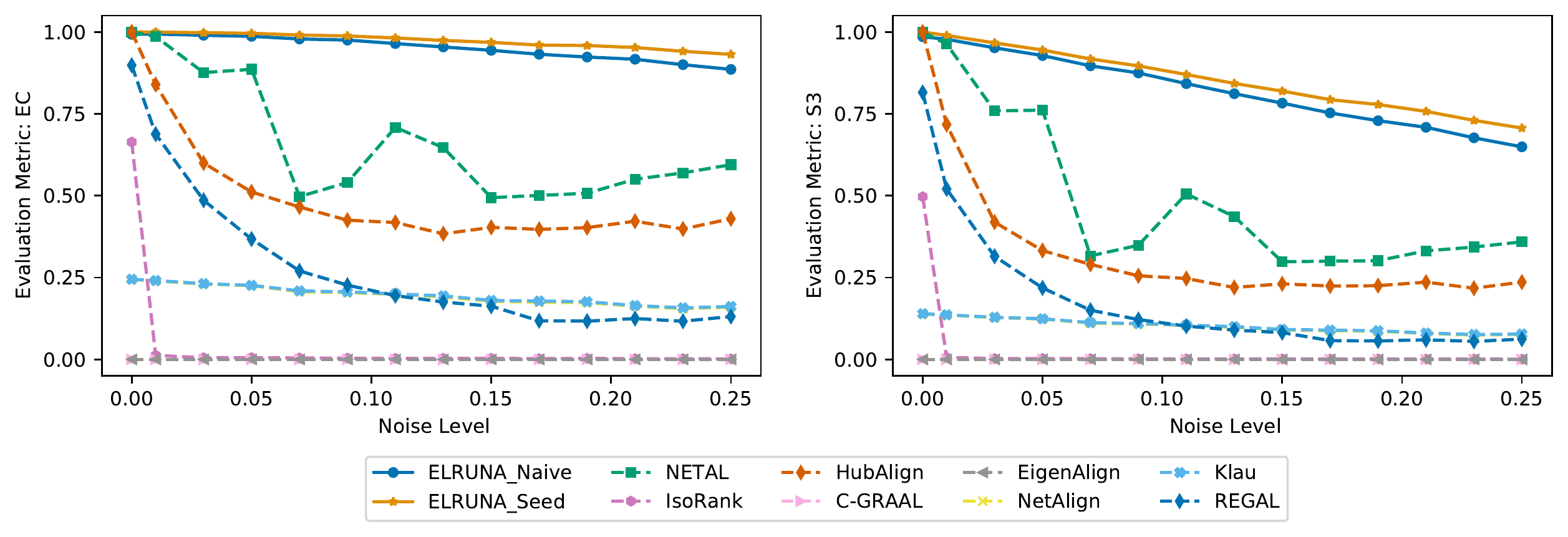}
    \caption{Alignment quality comparison on \texttt{social} network}
    \label{fig:social}
\end{figure}

\clearpage

\begin{figure}[!h] 
    \centering
    \includegraphics[width=1\linewidth]{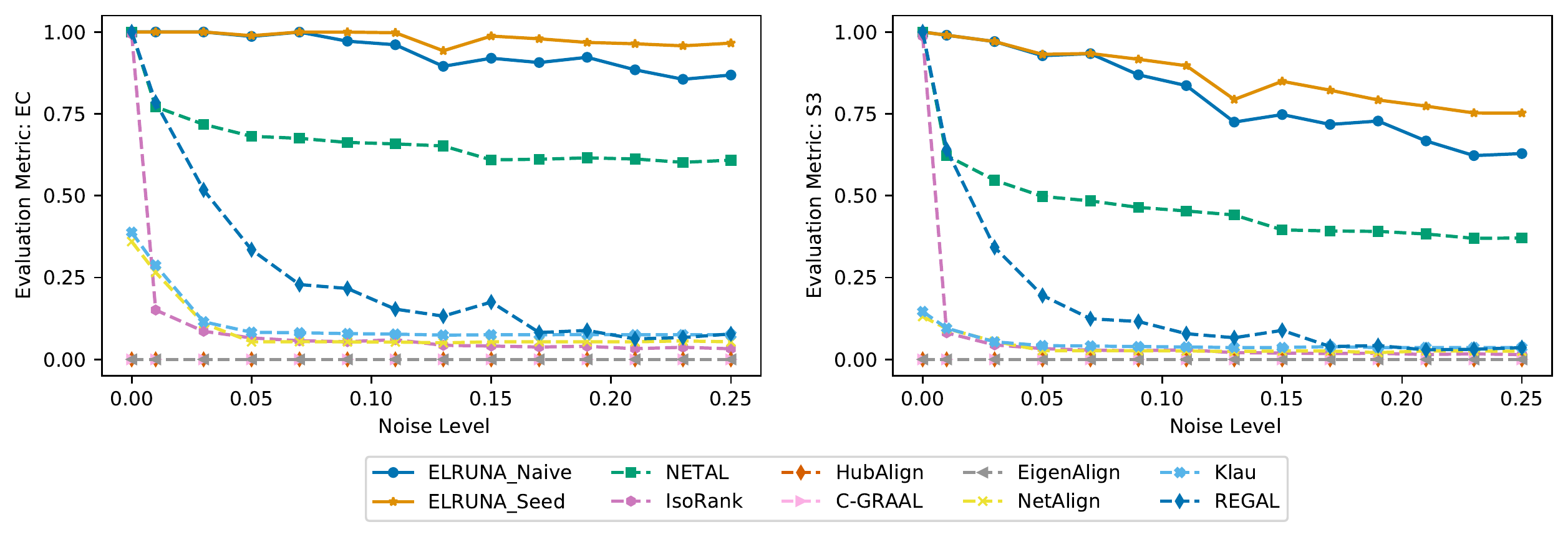}
    \caption{Alignment quality comparison on \texttt{google+} network}
    \label{fig:google}
\end{figure}

\subsection{Alignment between Homogeneous Networks}
In this experiment, we study how \algoname{elruna} performs when aligning two networks that were subgraphs of a larger network. Given a network $G$, we extract two \textit{induced} subnetworks $G_1$ and $G_2$ of $G$ for which $G_1$ and $G_2$ share a common set of vertices. We compare \texttt{\algoname{elruna}\_Naive} and \texttt{\algoname{elruna}\_Seed} with the state-of-the-art methods by aligning 3 pairs of $G_1$ and $G_2$ \cite{nr-sigkdd16, digg, facebook} from 3 domains, respectively. The properties of networks are listed in Table \ref{tab:homo}. Note that the implementation of $\texttt{REGAL}$ does not support alignment between networks with different sizes; therefore, it is not included in this testing case. In addition, the benchmark of \texttt{EigenAlign} is not included for the \texttt{facebook} networks because its running time was over 23 hours.

\par The first pair consists of two \texttt{DBLP} subnetworks with 2,455 overlapping nodes. The second pair of \texttt{Digg} social networks has 5,104 overlapping nodes, and  the \texttt{Facebook Friendship} network has 8,130 nodes in common. Note that under this setting, the highest $EC$ any algorithm can achieve is lower than 1. In fact, the optimal $EC$ value is not known. 

\begin{center}
\captionof{table}{Datasets for test case: \textit{Alignment between homogeneous networks}} \label{tab:homo} 
 \begin{tabular}{||c c c c||}
 \hline
 \textbf{Domain} & $n$ & $m$  & \textbf{label}\\ [0.5ex] 
 \hline \hline
 
 \texttt{DBLP} & 3,134 vs 3,875 & 7,829 vs 10,594 & \texttt{dblp} \\ \hline
 
 \texttt{Digg Social Network} & 6,634 vs 7,058 & 12,177 vs 14,896 & \texttt{digg} \\ \hline
  
 \texttt{Facebook Friendship} & 9,932 vs 10,380 & 26,156 vs 31,280 & \texttt{facebook} \\ \hline
 
\end{tabular}
\end{center}

\emph{\textbf{In this experiment, we demonstrate the results without local search. That is,  we demonstrate how \algoname{elruna} outperforms the  state-of-the-art methods}.} The results are shown in Figures \ref{fig:dblp}, \ref{fig:digg}, and \ref{fig:facebook}.

\begin{figure}[!h] 
    \centering
    \includegraphics[width=1\linewidth]{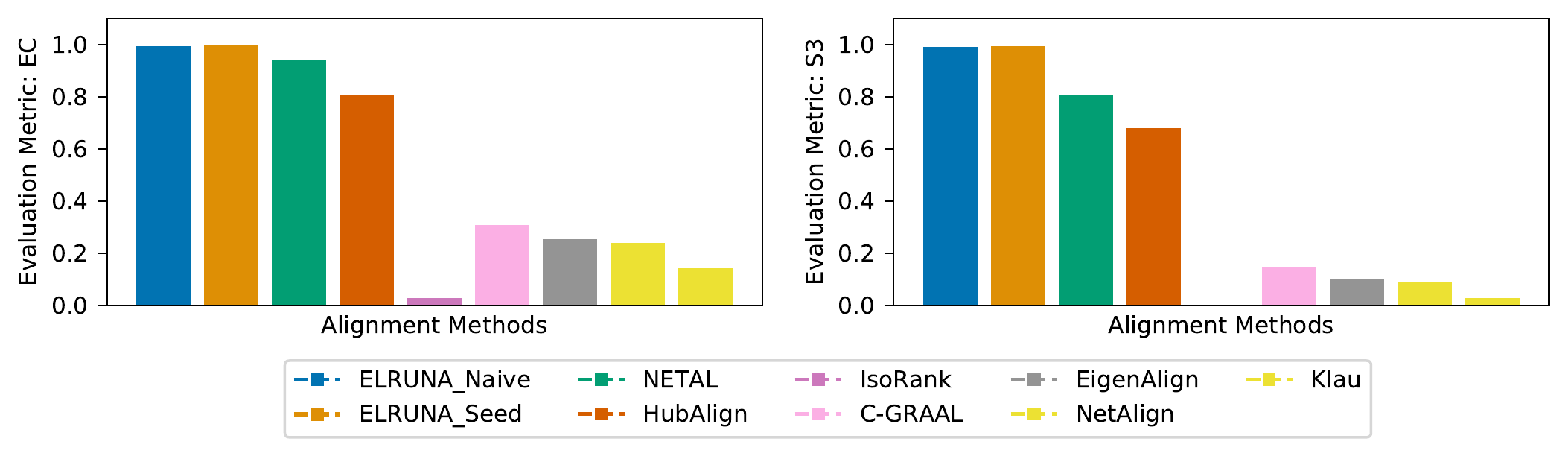}
    \caption{Alignment quality comparison on \texttt{dblp} networks}
    \label{fig:dblp}
\end{figure}

\begin{figure}[!h] 
    \centering
    \includegraphics[width=1\linewidth]{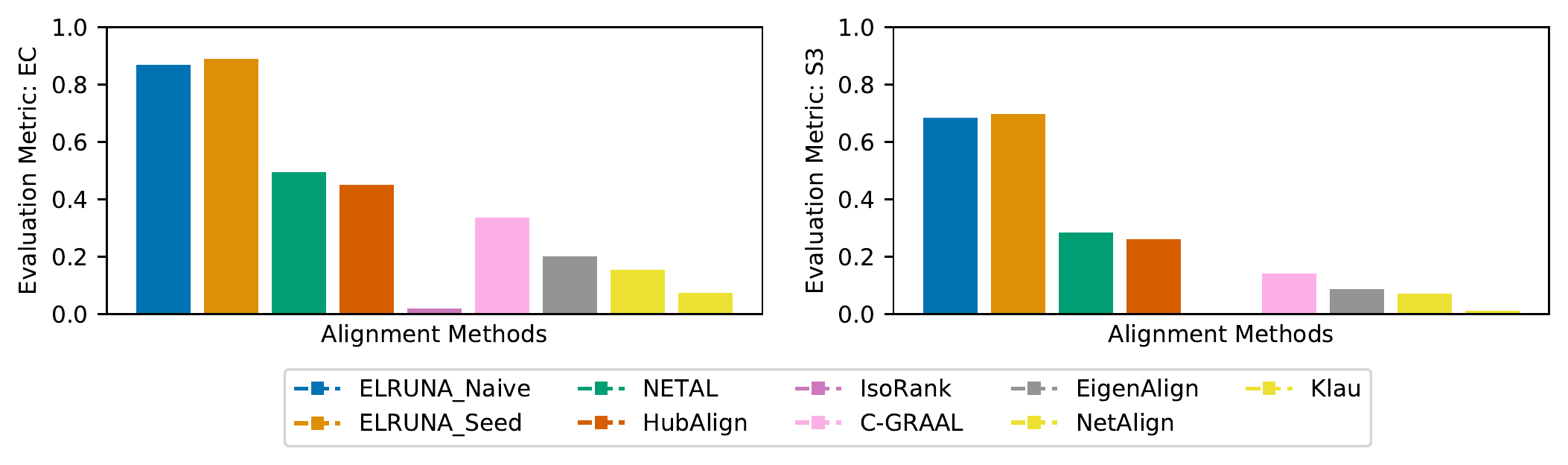}
    \caption{Alignment quality comparison on \texttt{digg} networks}
    \label{fig:digg}
\end{figure}

\begin{figure}[!h] 
    \centering
    \includegraphics[width=1\linewidth]{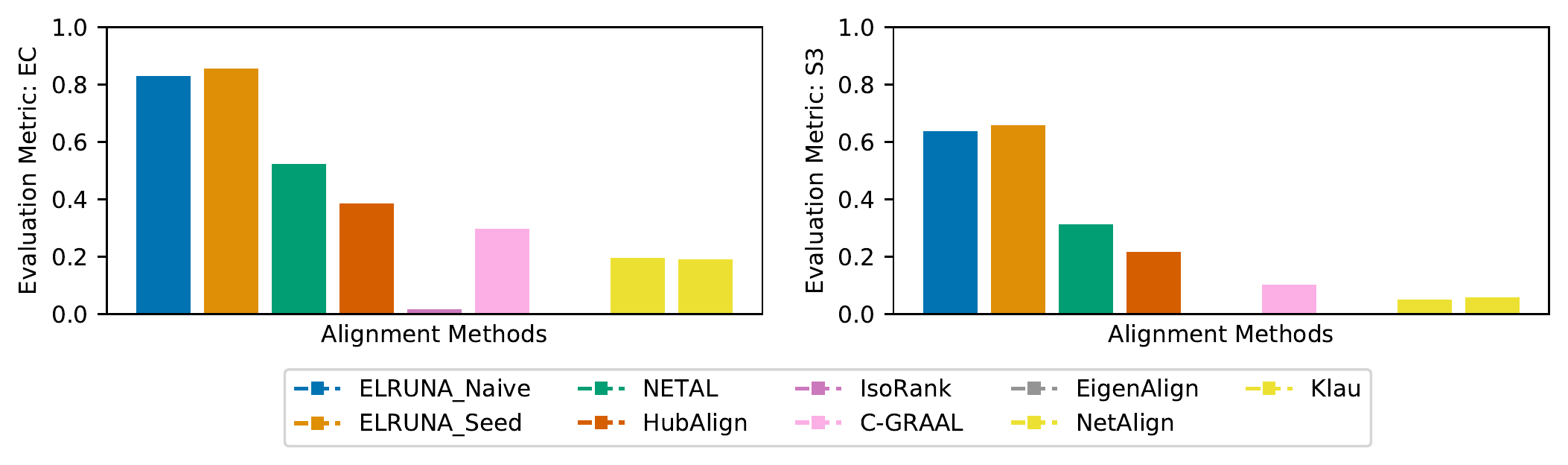}
    \caption{Alignment quality comparison on \texttt{facebook} networks}
    \label{fig:facebook}
\end{figure}

\par \textbf{Results.} \algoname{elruna} outperforms all other baselines on \texttt{DBLP}, \texttt{digg}, and \texttt{facebook} networks. For the \texttt{dblp} networks, even though the $EC$ value of \texttt{NETAL} is close to \algoname{elruna} (differs by $5.7734\%$), the difference between their $S^3$ score is more significant, with $\algoname{elruna}$ surpassing $\texttt{NETAL}$ by $18.7728\%$. For the \texttt{digg} and \texttt{facebook} networks, \texttt{\algoname{elruna}\_Naive} achieves a 175.88\% and a 158.45 \% increase in the $EC$ score over \texttt{NETAL}, respectively. As of $S^3$, \texttt{\algoname{elruna}\_Naive} achieves and a 239.65\% and a 203.8\% increase over \texttt{NETAL}. At the same time, the $EC$ produced by \texttt{\algoname{elruna}\_Naive} is 2 to 11 times higher than for other baselines. \texttt{\algoname{elruna}\_Seed} outperforms \texttt{\algoname{elruna}\_Naive} (and therefore all baselines) with improvement of $EC$ up to 2.76\% and improvement of $S^3$ up to 2.1\% over \texttt{\algoname{elruna}\_Naive}. 

\par This experiment further demonstrates the superiority of \algoname{elruna} in identifying the underlying similar topology of networks and discovering correspondences of nodes.

\subsection{Quality-Speed Trade-off and Scalability}
As shown in the preceding section, \algoname{elruna} significantly outperforms other methods in terms of alignment quality. In this section, we study the quality-speed trade-off of \algoname{elruna} against baselines. Then, we evaluate the scalability of \algoname{elruna}.

\subsubsection{Quality-Speed Trade-off} We first evaluate the trade off by running each algorithm on the \texttt{bio\_1}, \texttt{bio\_2}, \texttt{erdos}, \texttt{retweet}, and \texttt{social} networks under the highest noise levels ($p = 0.25$). That is, we align each $G_1$ with its corresponding $G_2^{(0.25)}$, then record the running time and alignment quality of each algorithm. In addition, we perform the same experiment on two pairs of networks from the homogeneous testing case: \texttt{digg} and \texttt{facebook} networks.

\par We measure the running time (in seconds) and alignment quality ($EC$ and $S^3$) of \texttt{\algoname{elruna}\_Naive} and \texttt{\algoname{elruna}\_Seed} with an incremental number of iterations. That is, we run the algorithm several times until convergence, each time with one additional iteration. For all other methods, we do not perform this incremental-iteration approach because  either they cannot specify the number of iterations or the alignment quality is significantly lower than that of \algoname{elruna}. The results are shown in Figures \ref{fig:time_bio}--\ref{fig:time_facebook}. For clarification, each dot on the \texttt{\algoname{elruna}\_Naive} and \texttt{\algoname{elruna}\_Seed} lines is one measurement after the termination of the algorithm under a particular number of iterations. Two adjacent dots are two measurements that differ by one additional iteration.

\par Note that the running times of \texttt{C-GRALL} and \texttt{EigenAlign} are not included because either they have crashed (as described in the preceding section) or both ran for several hours, which are not comparable to other methods. In addition, the running times of \texttt{REGAL} are not included for the \texttt{digg} and \texttt{facebook} networks because the implementation of \texttt{REGAL} does not support alignment between networks with different sizes. 

\begin{figure}[!h] 
    \centering
    \includegraphics[width=0.8\linewidth]{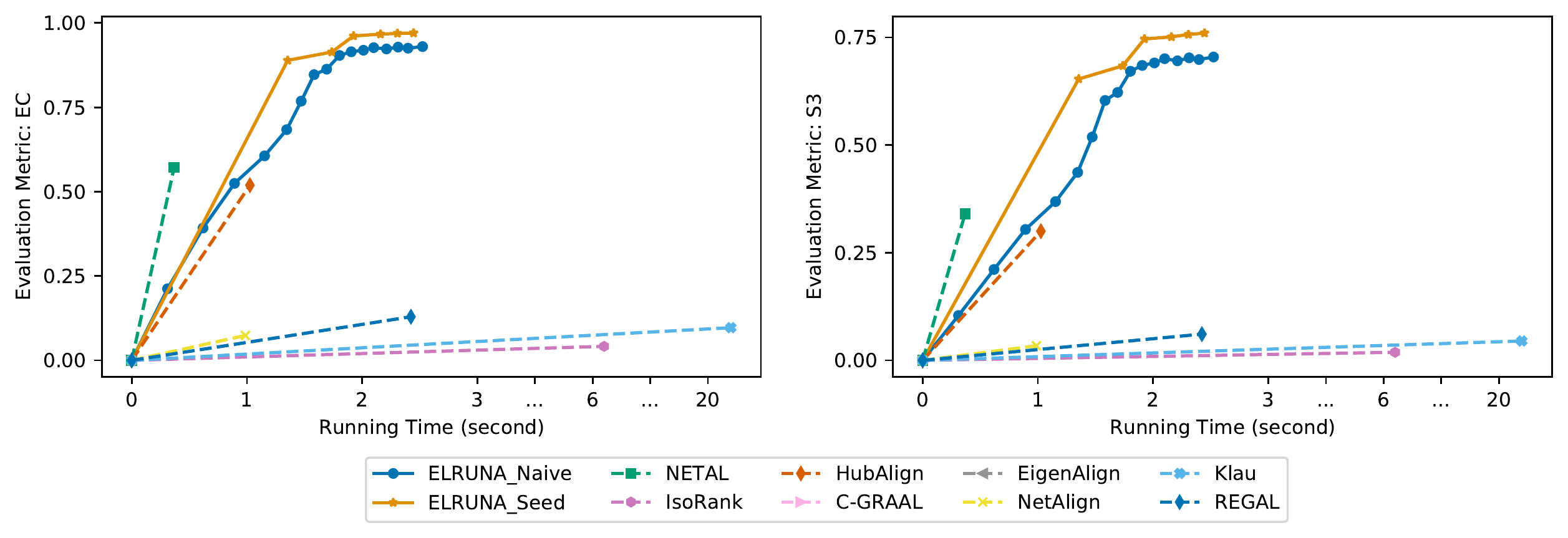}
    \caption{Quality-time comparison on \texttt{bio\_1} networks \runningtime}
    \label{fig:time_bio}
\end{figure}

\clearpage

\begin{figure}[!h] 
    \centering
    \includegraphics[width=0.8\linewidth]{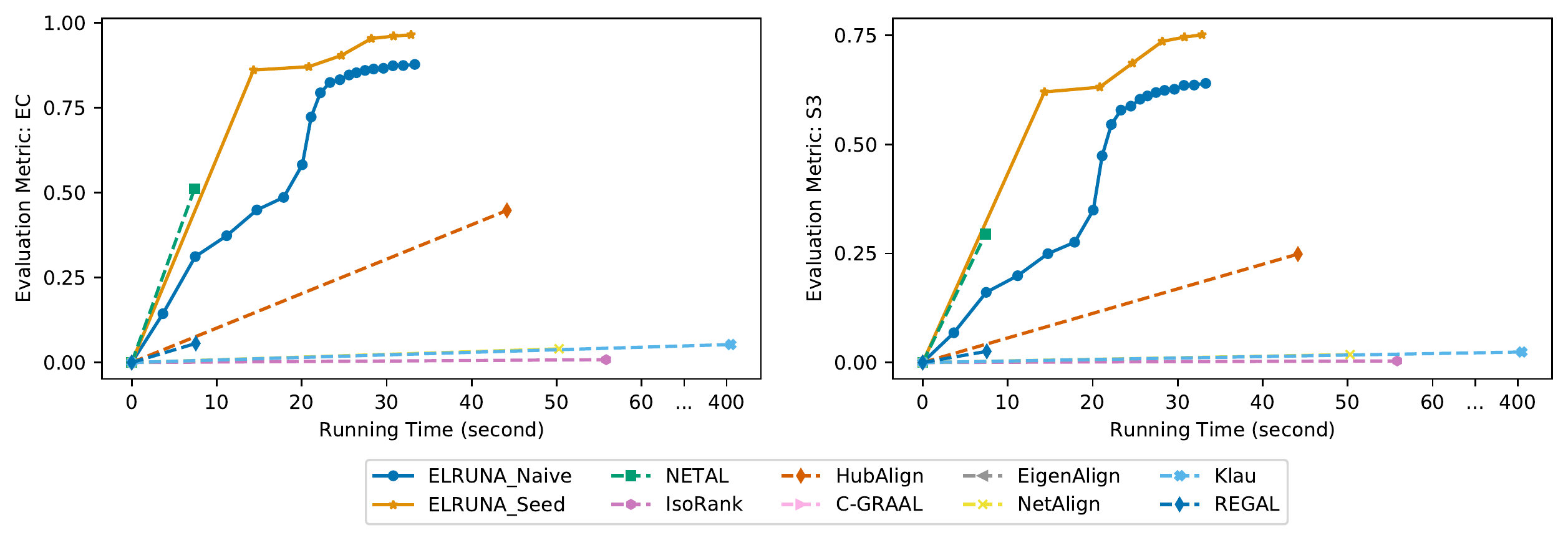}
    \caption{Quality-time comparison on \texttt{bio\_2} networks \runningtime}
    \label{fig:time_bio2}
\end{figure}

\begin{figure}[!h] 
    \centering
    \includegraphics[width=0.85\linewidth]{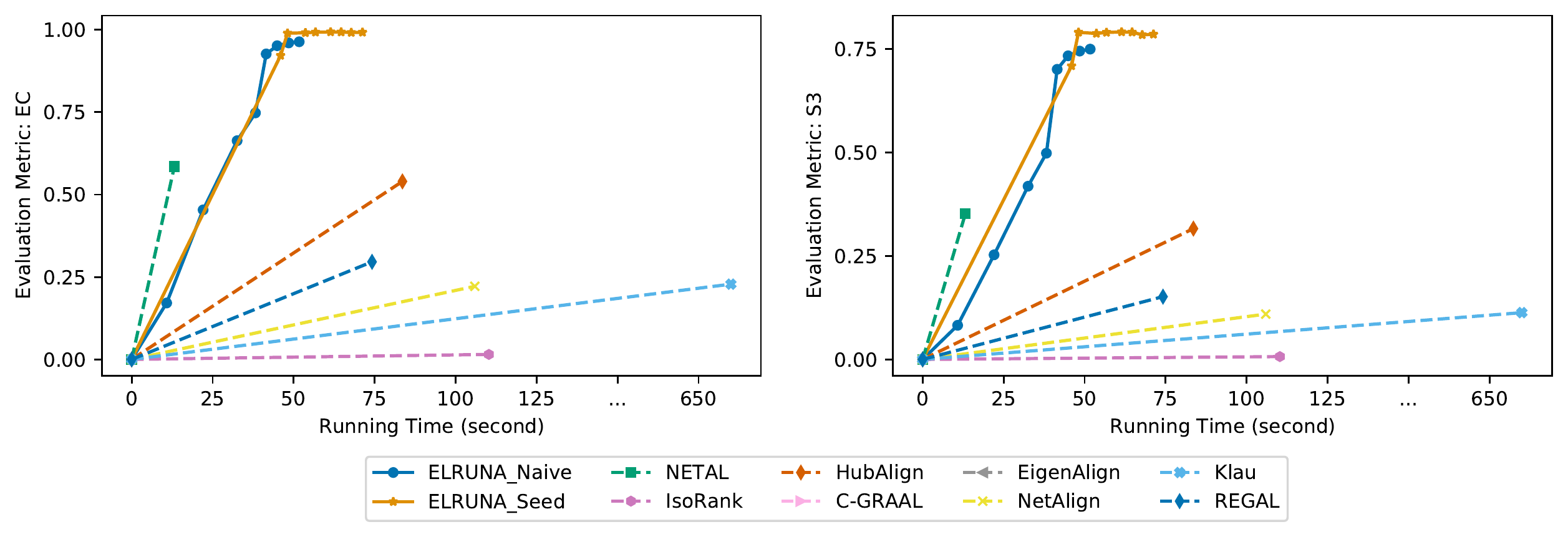}
    \caption{Quality-time comparison on \texttt{erdos} networks \runningtime}
    \label{fig:time_erdos}
\end{figure}

\begin{figure}[!h] 
    \centering
    \includegraphics[width=0.8\linewidth]{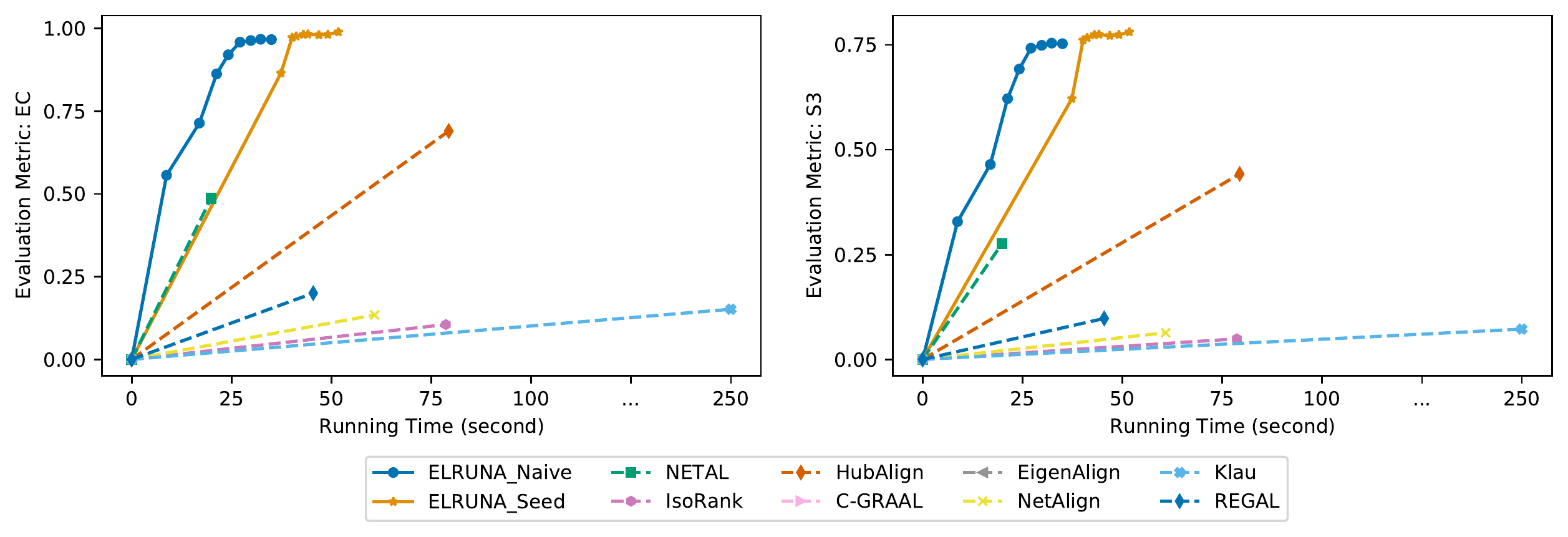}
    \caption{Quality-time comparison on \texttt{retweet\_1} networks \runningtime}
    \label{fig:time_retweet}
\end{figure}

\clearpage

\begin{figure}[!h] 
    \centering
    \includegraphics[width=0.8\linewidth]{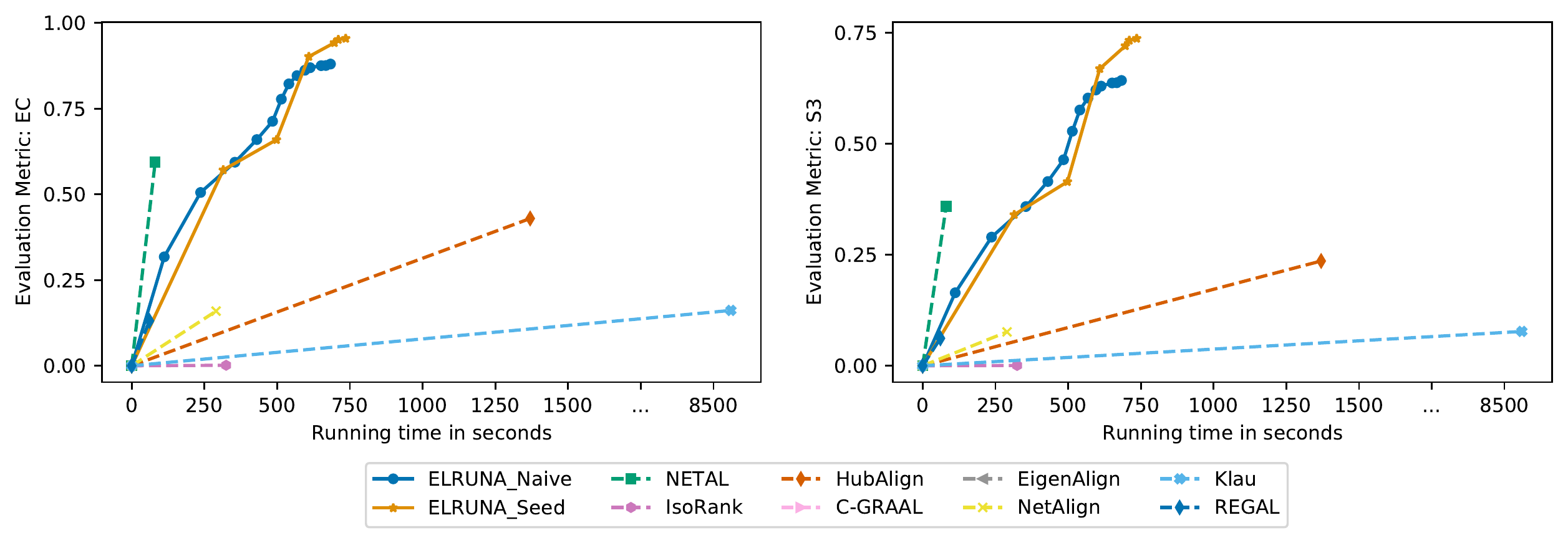}
    \caption{Quality-time comparison on \texttt{social} networks \runningtime}
    \label{fig:time_social}
\end{figure}

\begin{figure}[!h] 
    \centering
    \includegraphics[width=0.8\linewidth]{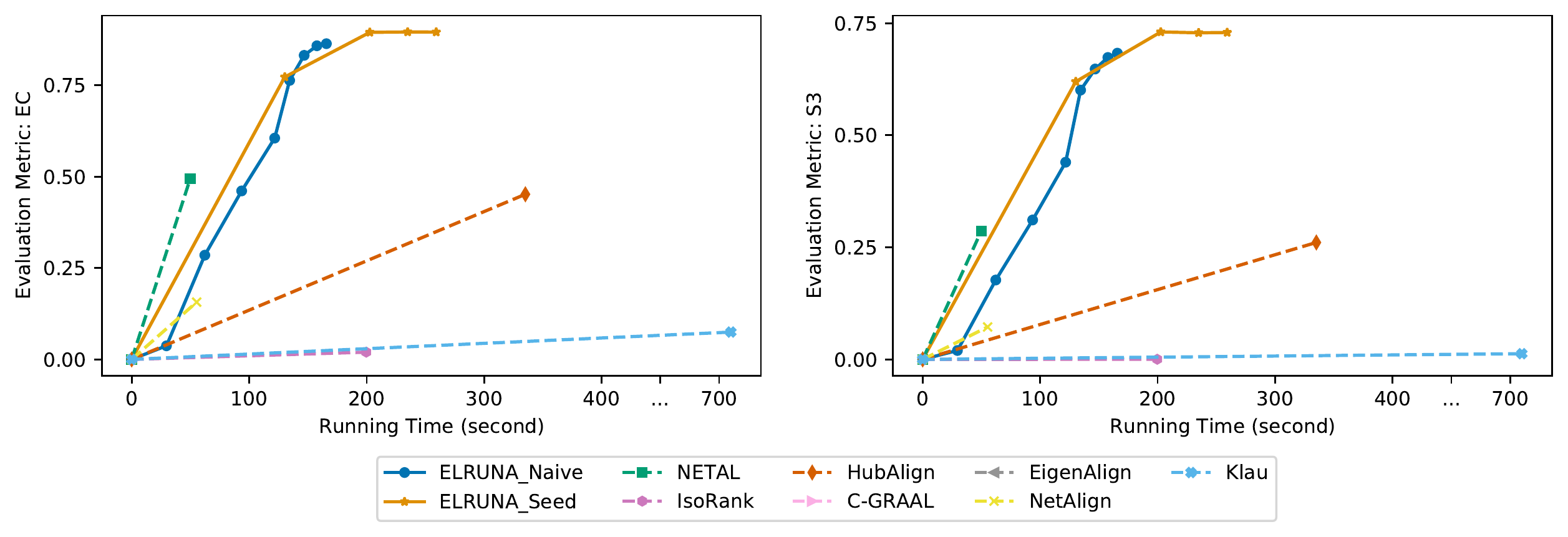}
    \caption{Quality-time comparison on \texttt{digg} networks \runningtime}
    \label{fig:time_digg}
\end{figure}

\begin{figure}[!h] 
    \centering
    \includegraphics[width=0.8\linewidth]{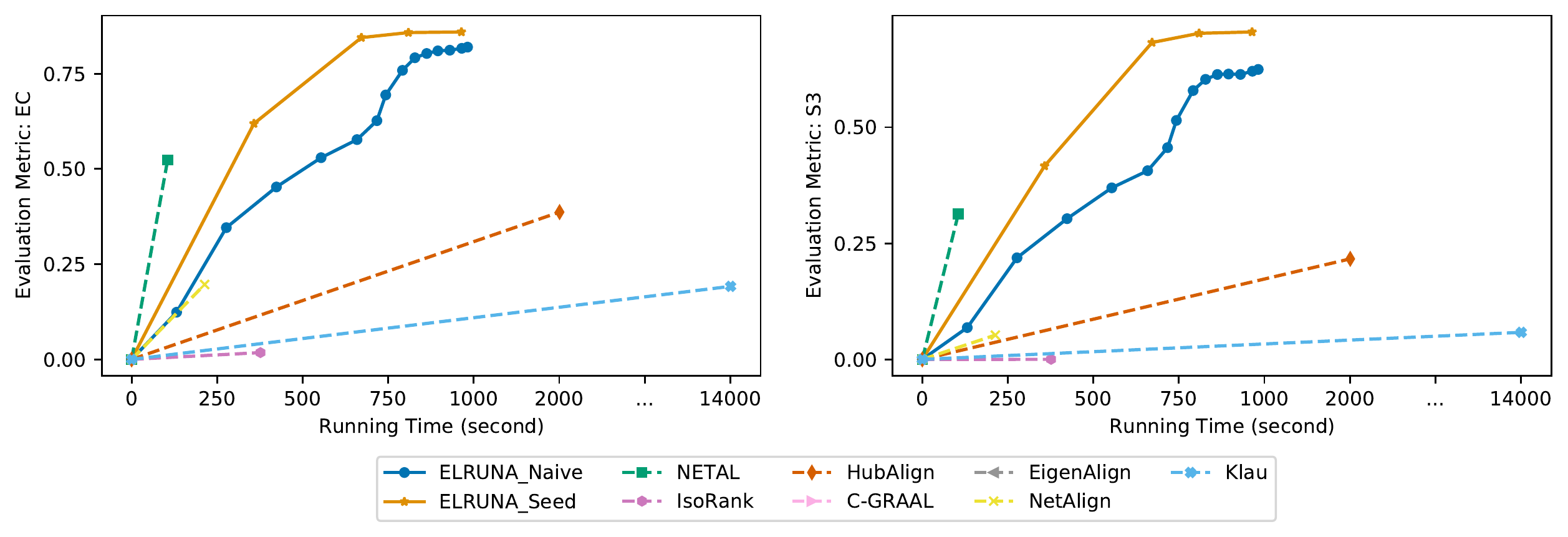}
    \caption{Quality-time comparison on \texttt{facebook} networks \runningtime}
    \label{fig:time_facebook}
\end{figure}

\clearpage

\textbf{Result}. As we have observed, in comparison with \texttt{Klau} and \texttt{HubAlign}, both versions of \algoname{elruna} achieve significantly better alignment results with lower running time. Moreover, \texttt{\algoname{elruna}\_Naive} and \texttt{\algoname{elruna}\_Seed} always have intermediate states (at some $k$th iteration) that have running times  similar to or lower  than those of \texttt{Netalign}, \texttt{REGAL}, and \texttt{IsoRank} but produce much better results. 

\par As for \texttt{NETAL}, we observe that \texttt{\algoname{elruna}\_Naive} always has an intermediate state with similar alignment quality and slightly higher running time. Also, \texttt{\algoname{elruna}\_Seed} always has an intermediate state with better alignment quality and slightly higher running time than those of \texttt{NETAL}. However, both proposed methods can further improve the alignment quality greatly beyond the intermediate state, whereas \texttt{NETAL} and other baselines cannot. In addition, as the two proposed algorithms proceed, each subsequent iteration always takes less time than the previous because the similarities are more defined after each iteration.

\par We observe that \texttt{\algoname{elruna}\_Seed} usually takes fewer iterations and longer running time to converge than does \texttt{\algoname{elruna}\_Naive}. These results are  expected because the \texttt{seed-and-extend} alignment method is more computationally expensive than the \texttt{naive} alignment method. 
\subsubsection{Scalability}
We evaluate the scalability of \texttt{\algoname{elruna}\_Naive} and \texttt{\algoname{elruna}\_Seed}  by running them on networks from the \textit{self-alignment without and under noise} testing case with no noisy edges. That is, we align $G_1$ with $G_2^{(0)}$. The result is shown in Figure \ref{fig:scale}.
\begin{figure}[!h] 
    \centering
    \includegraphics[width=1\linewidth]{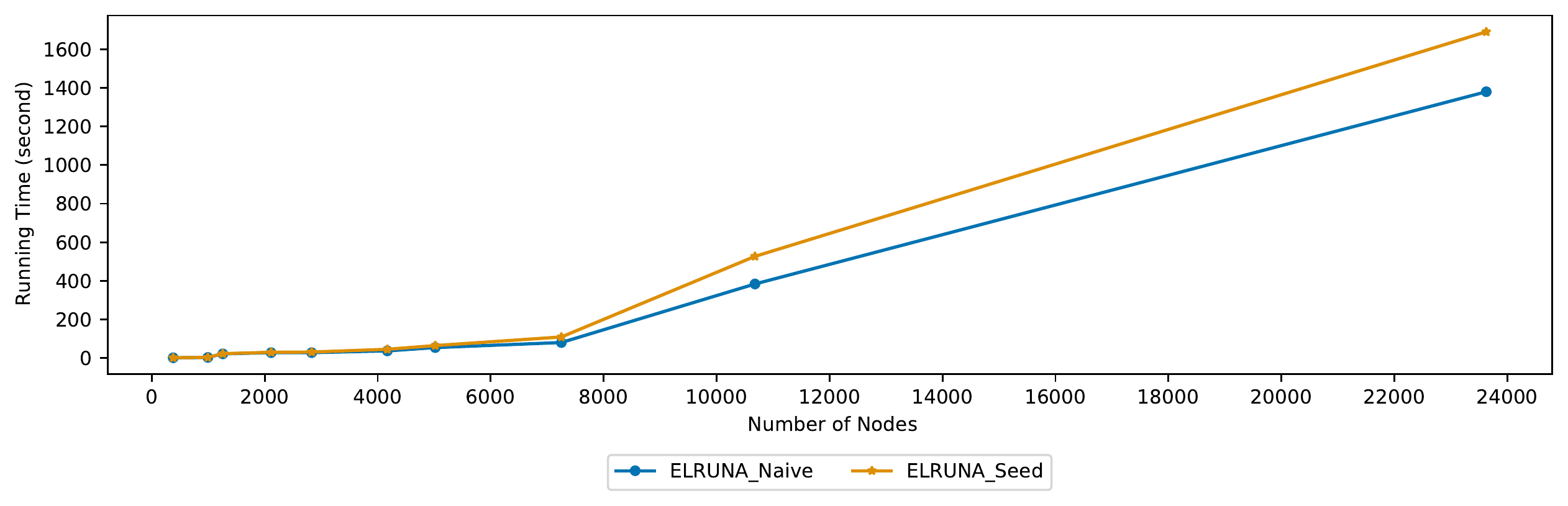}
    \caption{Scalability of \algoname{elruna}}
    \label{fig:scale}
\end{figure}

\par \textbf{Results.} We observe that the running time of both versions of \algoname{elruna} is quadratic with respect to the number of nodes in networks.

\subsection{Limitations: Alignment between Heterogeneous Networks}
In this section, we discuss the limitations of \algoname{elruna}. We compare \algoname{elruna} with baselines on 5 pairs of networks where each pair consists of two networks from different domains. The first 3 pairs are social networks collected in ~\cite{final}, and the other two pairs are biological networks~\cite{saraph2014magna}. The details of these networks are listed in Table \ref{tab:hete}.

\begin{center}
\captionof{table}{Datasets for test case: \textit{Alignment between heterogeneous networks}} \label{tab:hete} 
 \begin{tabular}{||l c c||} 
 \hline
\textbf{Doamin} & $n$ & $m$ \\ [0.5ex] 
 \hline 
 
\texttt{Offline} vs \texttt{Online} & 1,118 vs 3,906 & 1,511 vs 8,164 \\ \hline
 
\texttt{Flickr} vs \texttt{Lastfm} & 12,974 vs 15,436 & 16,149 vs 16,319  \\ \hline
 
\texttt{Flickr} vs \texttt{Myspace} & 6,714 vs 10,693 & 7,333 vs 10,686 \\ \hline

\texttt{Syne} vs \texttt{Yeast} & 1,837 vs 1,994 & 3,062 vs 15,819  \\ \hline

\texttt{Ecoli} vs \texttt{Yeast} & 1,274 vs 1,994 & 3,124 vs 15,819 \\ \hline
 
\end{tabular}
\end{center}

In this experiment, we demonstrate the results without local search; that is,  we show how \algoname{elruna} outperforms the  state-of-the-art methods. Results are shown in Figures \ref{fig:douban}--\ref{fig:yeast}. Because each pair of networks does not have underlying isomorphic subgraphs, \emph{the optimal objective is not known}, and the highest EC is not 1.

\par We note that pairs of networks in this testing case do not have structurally similar underlying subgraphs. In fact, their topology could be very distinct from each other, given their different domains. Also, as we stated at the beginning of the experiment section, this comparison scenario is usually used by \emph{attributed} network alignment algorithms. Therefore, it is not exactly what we solve with our formulation that relies solely on graph structure.

\par \textbf{Results.}  
We observe that \texttt{\algoname{elruna}\_Naive} did not perform well in this testing case  due to the naive alignment method that it uses. As for \texttt{\algoname{elruna}\_Seed}, it achieves a 7.85\% and a 11.41\% increase in $EC$ over  \texttt{HubAlign} and \texttt{NETAL}, respectively, for the pairs \texttt{offline} and \texttt{online} networks. Regarding  other pairs of networks, the \texttt{\algoname{elruna}\_Seed}'s improvements of $EC$ \texttt{HubAlign} and \texttt{NETAL} are statistically insignificant. The results suggest that we need to enhance our algorithm in order to perform better under this testing case. One future direction is to extend \algoname{elruna} to handle attributed networks. 

\par Another limitation is that \algoname{elruna} does not rely on the predefined vertex similarities; therefore, when such information is given, \algoname{elruna} cannot utilize it to achieve better performance. For future work, we want to augment \algoname{elruna}, which considers the prior similarities between vertices.

\begin{figure}[!h] 
    \centering
    \includegraphics[width=0.5\linewidth]{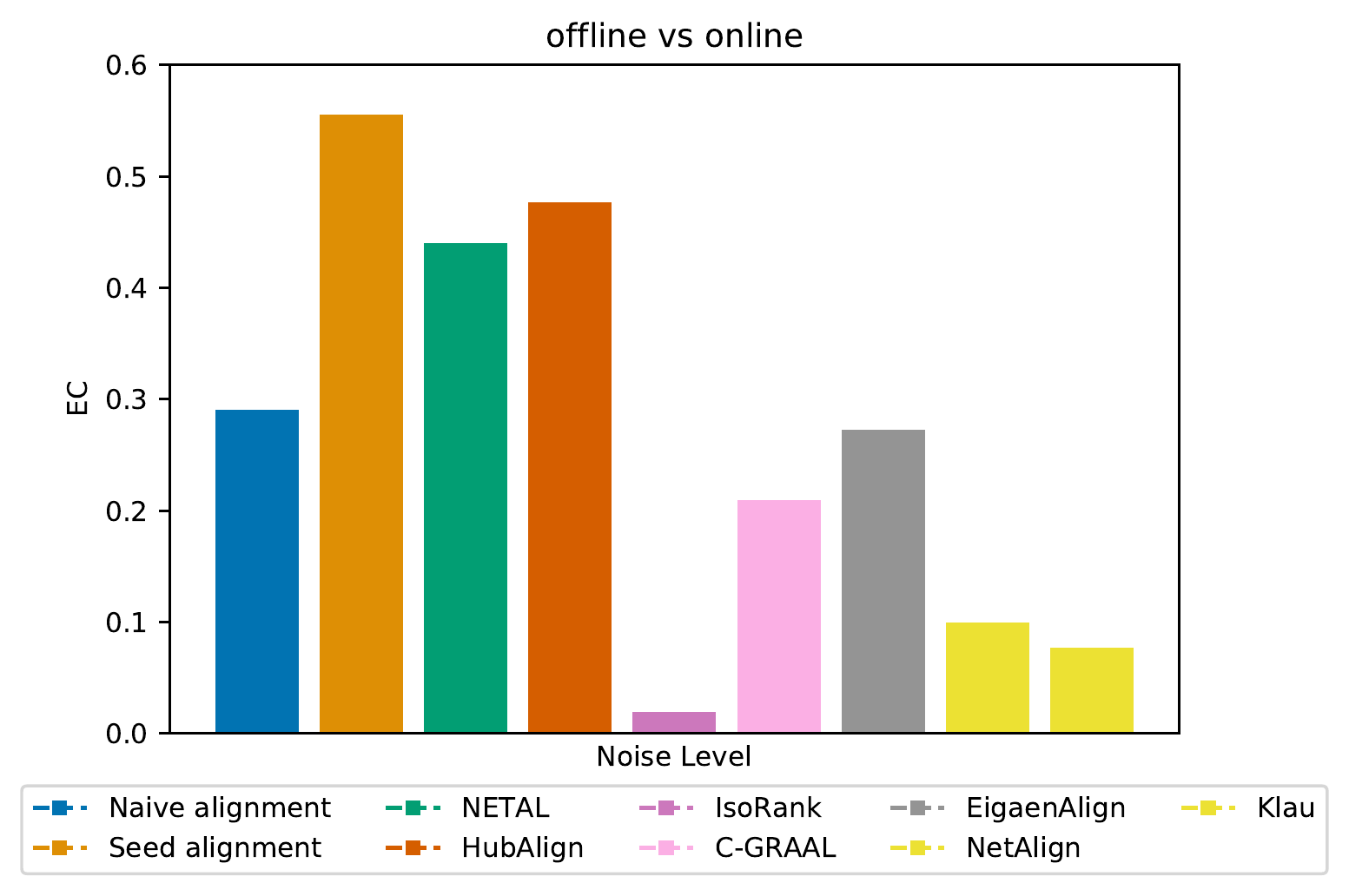}
    \caption{\texttt{offline} vs \texttt{online} networks}
    \label{fig:douban}
\end{figure}

\begin{figure}[!h] 
    \centering
    \includegraphics[width=0.6\linewidth]{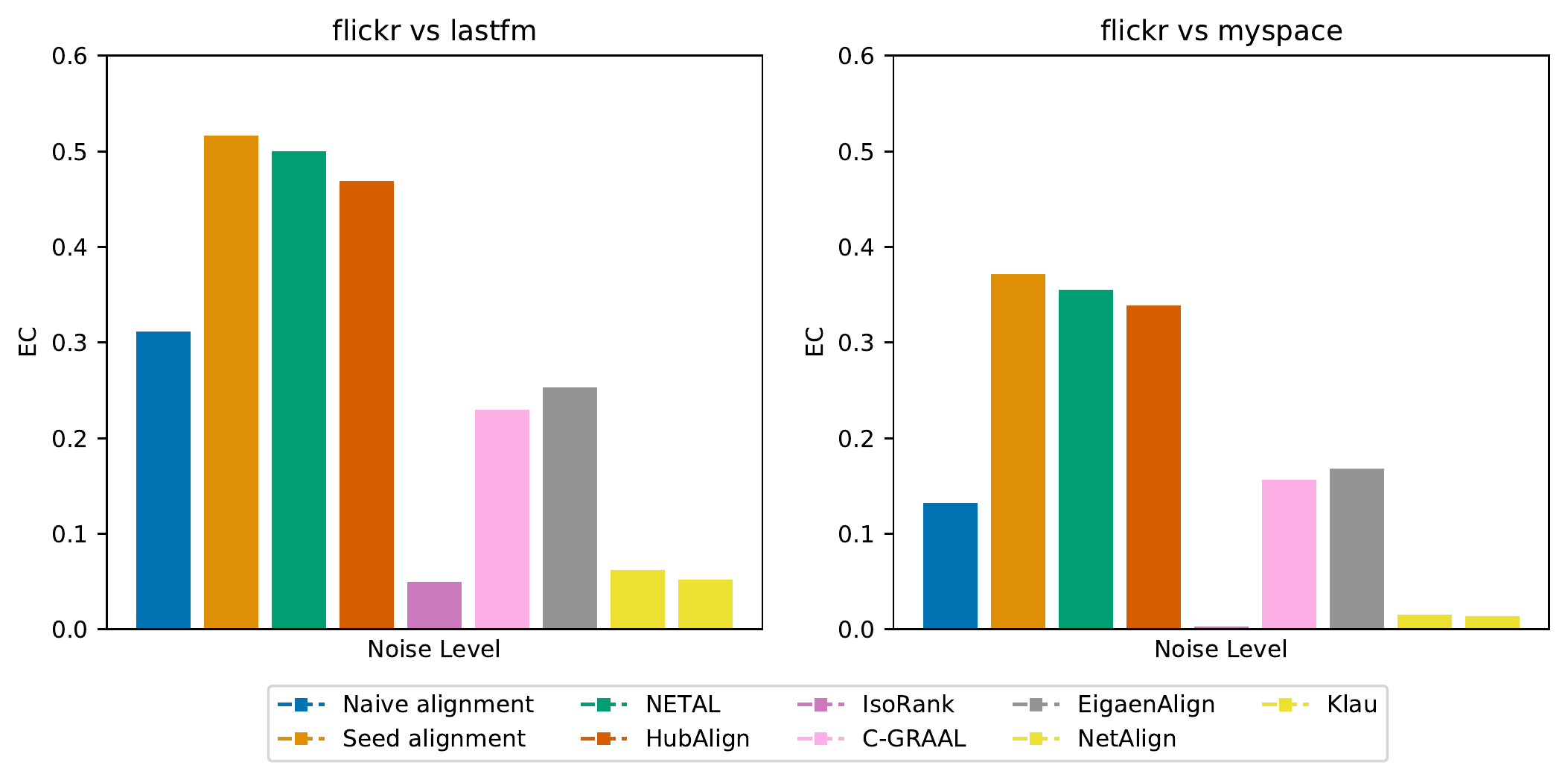}
    \caption{\texttt{Flickr} vs \texttt{Lastfm} and \texttt{Flickr} vs \texttt{Myspace} networks}
    \label{fig:flickr}
\end{figure}
\begin{figure}[!h] 
    \centering
    \includegraphics[width=0.6\linewidth]{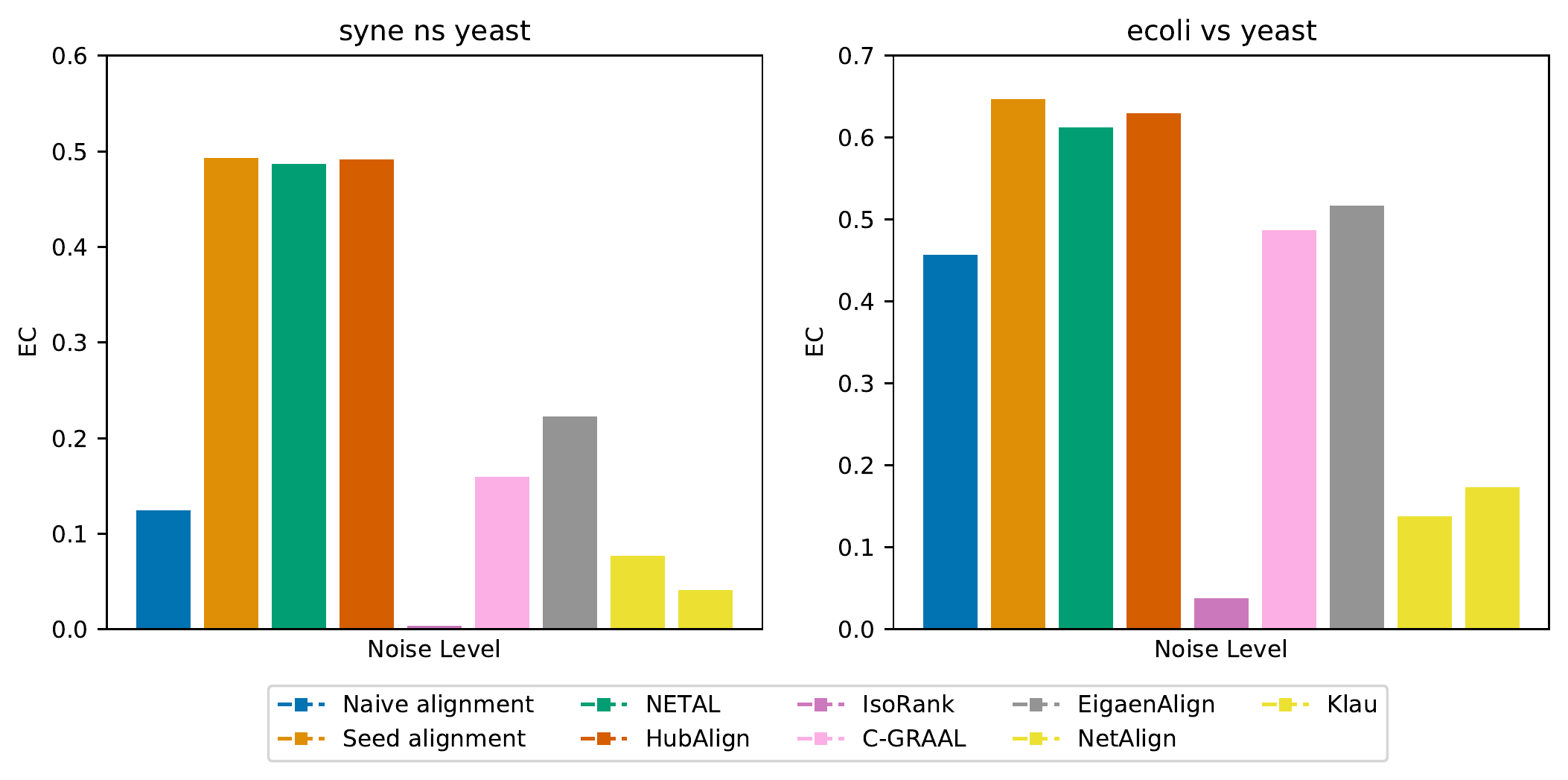}
    \caption{\texttt{Syne} vs \texttt{Yeast} and \texttt{Ecoli} vs \texttt{Yeast} networks}
    \label{fig:yeast}
\end{figure}

\subsection{Evaluation of \algoname{Rawsem}}
In this section, we compare the performance of the proposed \algoname{Rawsem} against the \texttt{baseline local search}. To evaluate their efficiency and effectiveness, we apply both methods as the postprocessing steps for \texttt{\algoname{elruna}\_Naive}. We align real-world networks from the \textit{self-alignment without and under noise} testing case under the highest noise level. That is, $G_1$ is aligned with $G_2^{(0.25)}$. For each method we run it $20$ times on each pair of networks and  measure its average running time, the average number of iterations to reach an optimum, and the average improved alignment quality. The results are summarized in Table \ref{tab:rawsem}.

\textbf{Results.} Clearly, \algoname{Rawsem} outperforms the \texttt{Baseline local search} in terms of  efficiency and effectiveness. In particular, \algoname{Rawsem} achieves an increase of up to 13 times and 11 times for $EC$ and $S^3$, respectively, over the \texttt{Baseline local search}. Additionally, the number of iterations of the \texttt{Baseline local search} is orders of magnitude larger than that of \algoname{Rawsem}.  This experiment shows that \algoname{Rawsem} can boost the alignment quality within seconds, thus making it a great candidate for a postprocessing step of network alignment algorithms.

\par We observe that \algoname{Rawsem} does not always raise the objective to the global optima. This fact suggests that we should combine our selection rule with different local search methods, such as simulated annealing, to further enhance its performance. This is a further direction. 

\begin{center}
\captionof{table}{\algoname{Rawsem} vs \texttt{Baseline} as postprosessing steps} \label{tab:rawsem} 
\begin{tabular}{|l|rr|rr|}
\hline
\multicolumn{1}{|c|}{} & \multicolumn{2}{c|}{\texttt{co-autho\_1}} & \multicolumn{2}{c|}{\texttt{bio\_1}}\\
\cline{2-5}
\multicolumn{1}{|c|}{} & \algoname{Rawsem} & \texttt{Baseline} & \algoname{Rawsem} & \texttt{Baseline}\\

No. of iterations &  1,301  &28,091  & 1,817
 & 32,274\\
Time (in seconds) & 0.093 & 2.052 & 0.095 & 2.327   \\

Improved $EC$ & 2.3\% & 0.982\% &3.231\% &1.073\%\\

Improved $S^3$ &4.91\% & 1.175\% &5.193\% & 1.91\%\\

\hline
\multicolumn{1}{|c|}{} & \multicolumn{2}{c|}{\texttt{econ}} & \multicolumn{2}{c|}{\texttt{router}}\\
\cline{2-5}

\multicolumn{1}{|c|}{} & \algoname{Rawsem} & \texttt{Baseline} & \algoname{Rawsem} & \texttt{Baseline}\\

No. of iterations &  1,580 & 50,000 & 798
 & 42,367\\
Time (in seconds) & 0.11 & 3.324 & 0.077 & 3.049 \\

Improved $EC$ & 0.506\% & 0\% &1.406\% &0.441\%\\

Improved $S^3$ &0.724\% & 0\% &2.991\% & 0.892\%\\
\hline

\multicolumn{1}{|c|}{} & \multicolumn{2}{c|}{\texttt{bio\_2}} & \multicolumn{2}{c|}{\texttt{retweet\_1}}\\
\cline{2-5}

\multicolumn{1}{|c|}{} & \algoname{Rawsem} & \texttt{Baseline} & \algoname{Rawsem} & \texttt{Baseline}\\

No. of iterations &  3,069 & 61,290 & 4,415 & 60,272\\

Time (in seconds) & 0.191 & 4.48 & 0.217 & 4.411 \\

Improved $EC$ & 5.477\% & 0.61\% &1.241\% & 0.392\%\\

Improved $S^3$ &7.017\% & 1.326\% &2.019\% & 0.673\%\\

\hline
\multicolumn{1}{|c|}{} & \multicolumn{2}{c|}{\texttt{erods}} & \multicolumn{2}{c|}{\texttt{retweet\_2}}\\
\cline{2-5}

\multicolumn{1}{|c|}{} & \algoname{Rawsem} & \texttt{Baseline} & \algoname{Rawsem} & \texttt{Baseline}\\

No. of iterations & 4,701 & 88,221 & 3,320 & 72,392\\

Time (in seconds) & 0.323 & 6.25 & 0.204 & 5.19 \\

Improved $EC$ & 2.91\% & 0.31\% &5.72\% & 0.437\%\\

Improved $S^3$ &4.801\% & 0.781\% &10.08\% & 0.901\%\\
\hline

\multicolumn{1}{|c|}{} & \multicolumn{2}{c|}{\texttt{social}} & \multicolumn{2}{c|}{\texttt{google+}}\\
\cline{2-5}

\multicolumn{1}{|c|}{} & \algoname{Rawsem} & \texttt{Baseline} & \algoname{Rawsem} & \texttt{Baseline}\\

No. of iterations & 7,701 & 100,297 & 11,928 & 152,116\\

Time (in seconds) & 1.953 & 7.855 & 3.04 & 10.238 \\

Improved $EC$ & 4.903\% & 0.631\% &3.29\% & 0.723\%\\

Improved $S^3$ &9.29\% & 1.01\% &7.81\% & 1.59\%\\
\hline
\end{tabular}
\end{center}

\section{Conclusion and Future Work}
In this paper, we propose \algoname{elruna}, an iterative network alignment algorithm based on  elimination rules. We also introduce \algoname{rawsem}, a novel random-walk-based selection rule for local search schemes that decreases the number of iterations needed to reach a local or global optimum. We conducted extensive experiments and demonstrate the superiority of \algoname{elruna} and \algoname{rawsem}. 

For future work, we aim to (1) improve the performance of \algoname{elruna} on aligning regular graphs, (2) extend \algoname{elruna} on aligning dense networks, and (3) develop advanced local search schemes to further reduce the number of iterations. Another possible extension of this work is to adopt the proposed algorithm to run on quantum computers as we have done  in our previous work~\cite{shaydulin2018community, ushijima2019multilevel, shaydulin2019network, shaydulin2019quantum}.

\section*{Acknowledgement}
We  thank Gunnar W. Klau for helping us with the source code of \texttt{Klau}'s algorithm.
Also, we thank Clemson University for its allotment of compute time on the Palmetto cluster. Argonne National Laboratory's work was supported by the U.S. Department of Energy, Office of Science, under contract DE-AC02-06CH11357. We would like to thank three anonymous reviewers whose insightful comments helped to considerably improve this paper.

\clearpage

\bibliographystyle{plain}
\bibliography{main}

\appendix
\newpage 
\section*{Appendix}
\subsection*{\textbf{Pictorial examples of Rules 1 and 3}}
\begin{figure}[!h] 
    \centering
    \includegraphics[width=0.55\linewidth]{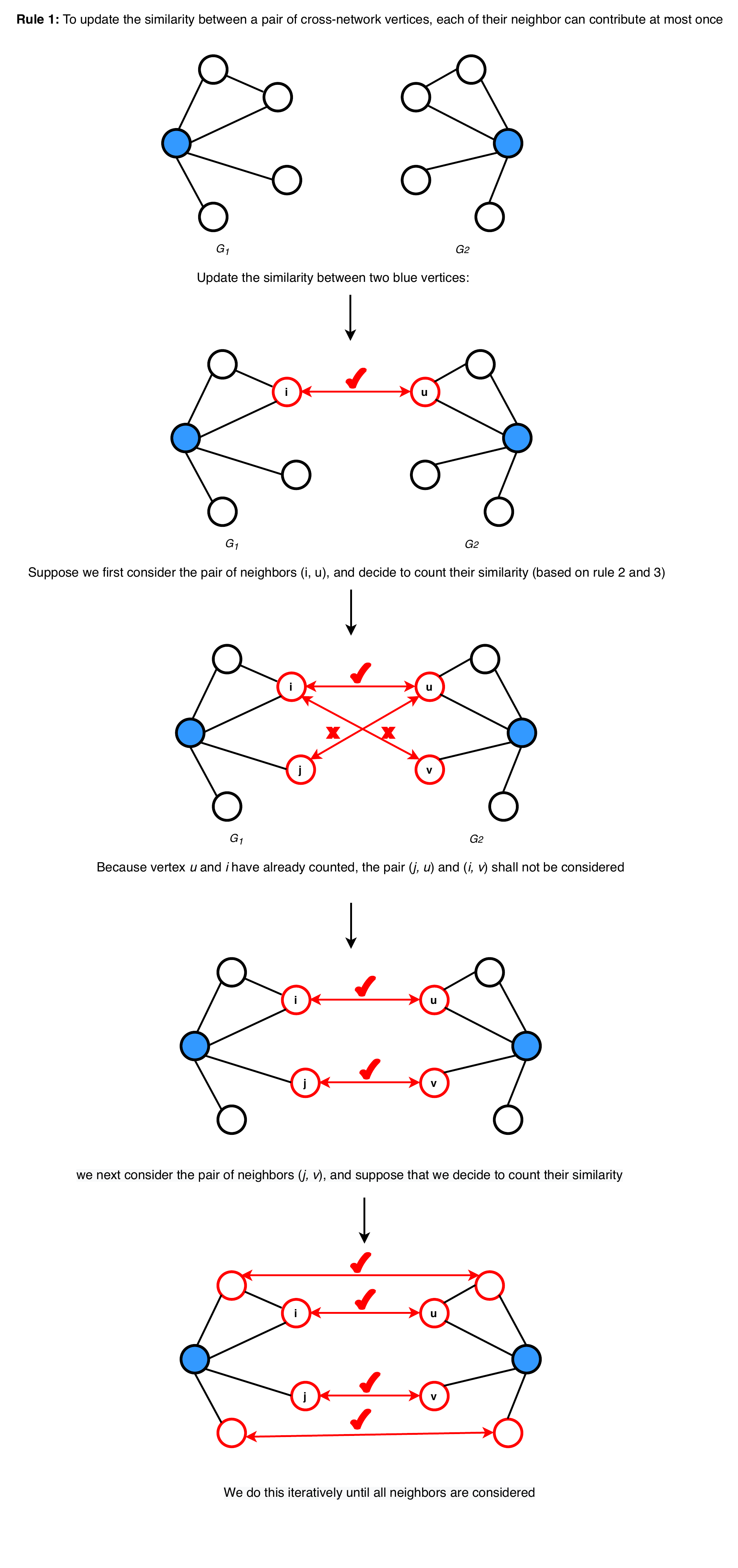}
    \caption{Rule 1 on an example graph}
\end{figure}

\begin{figure}[!h] 
    \centering
    \includegraphics[width=1\linewidth]{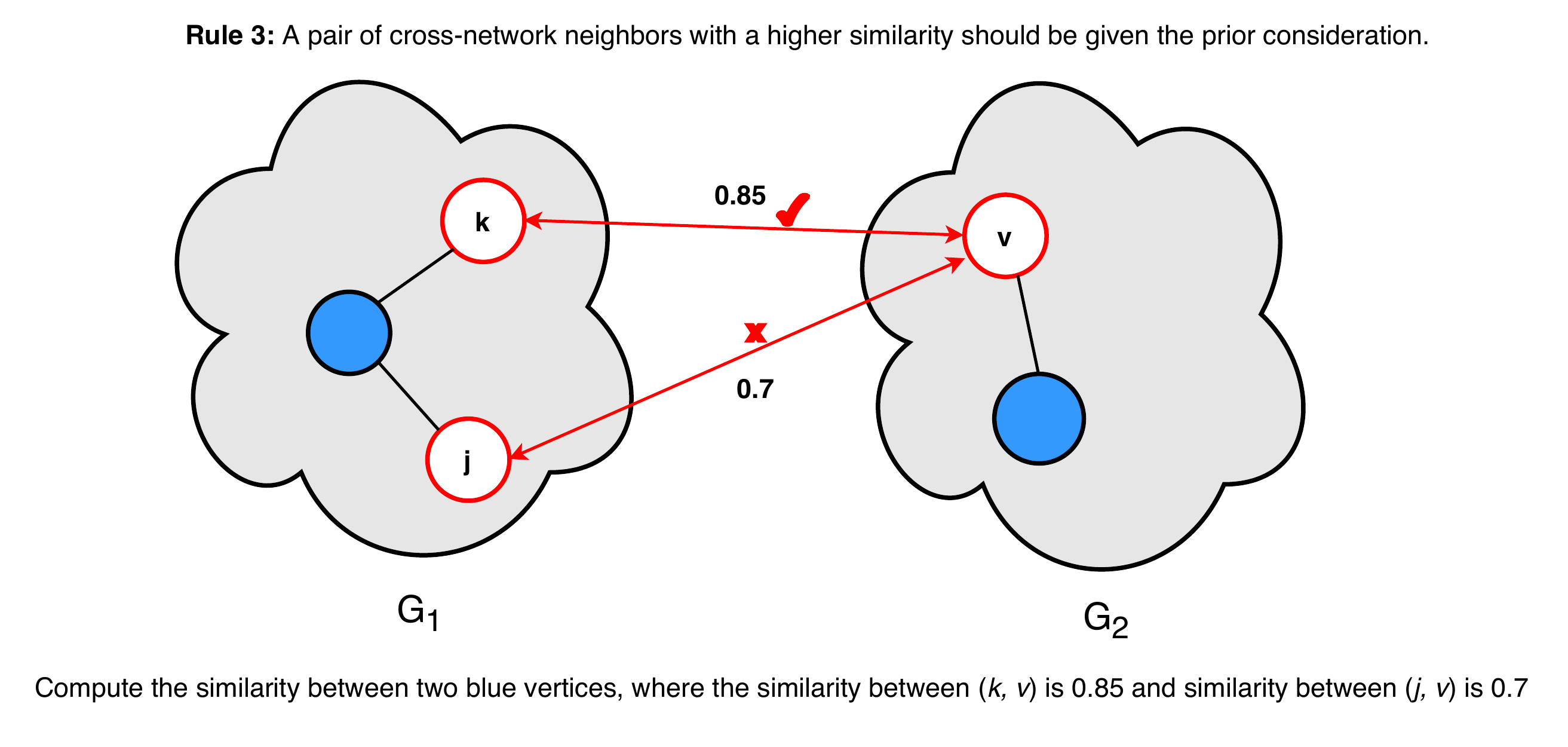}
    \caption{Rule 3 on an example graph}
\end{figure}

\end{document}